\documentclass[ba,preprint]{imsart}

\RequirePackage{amsthm,amsmath,amsfonts,amssymb}
\RequirePackage[authoryear]{natbib}
\RequirePackage[colorlinks,citecolor=blue,urlcolor=blue,backref=page,backref=page]{hyperref}
\RequirePackage{graphicx}
\RequirePackage{tikz}
\usepackage{tabularx}
\usepackage{algorithm}
\usepackage{algpseudocode}
\usepackage{booktabs}
\usepackage{float}
\usepackage{pgfplots}
\pgfplotsset{compat=1.18}

\let\oldappendix\appendix 
\renewcommand{\appendix}{%
  \oldappendix
  \renewcommand{\thesubsection}{Appendix \Alph{subsection}}
}

\firstpage{1}

\lastpage{1}

\startlocaldefs
\theoremstyle{plain}

\newtheorem{theorem}{Theorem}
\newtheorem{proposition}{Proposition}
\newtheorem{lemma}{Lemma}
\newtheorem{corollary}{Corollary}
\theoremstyle{definition}
\newtheorem{definition}{Definition}
\newtheorem{example}{Example}
\newtheorem{property}{Property}

\theoremstyle{remark}

\endlocaldefs

\begin{document}

\begin{frontmatter}
\title{Context Tree Prior Distributions based on Node Weighting with exact Bayes Factors}
\runtitle{Context Tree Priors based on Node Weighting}

\begin{aug}
\author[A]{\fnms{Thiago}~\snm{Paulichen}\ead[label=e1]{thiagopaulichen@gmail.com}} \and
\author[B]{\fnms{Victor}~\snm{Freguglia}\ead[label=e2]{victorfs@unicamp.br}}
\address[A]{Department of Statistics,
State University of Campinas\printead[presep={,\ }]{e1}}

\address[B]{Department of Statistics,
State University of Campinas\printead[presep={,\ }]{e2}}
\runauthor{T. Paulichen and V. Freguglia}
\end{aug}

\begin{abstract}
Variable-length Markov chains (VLMCs) are a flexible class of higher-order Markov models that admit a natural representation as context trees. Existing Bayesian methods for specifying prior distributions on trees rely on branching processes, but these suffer from a fundamental limitation: the connection between node-branching probabilities and the structural properties of the induced tree distribution is not straightforward, making it difficult to encode specific structural beliefs. We address this issue by introducing a novel representation of prior distributions on tree spaces, characterized by assigning weights to individual contexts through a function on nodes. In this way, our approach provides an intuitive mechanism for incorporating structural hypotheses into the prior while preserving computational tractability, allowing marginal likelihoods and posterior mode trees to be computed exactly via generalizations of the Context Tree Weighting (CTW) and Context Tree Maximizing (CTM) algorithms. By enabling exact Bayes factor calculations, our methodology provides a principled framework for model comparison over structural priors. We demonstrate the flexibility and effectiveness of our approach by comparing different prior specifications through simulation studies and an application to financial markets.
\end{abstract}

\begin{keyword}
\kwd{Bayesian context trees}
\kwd{Bayes factors}
\kwd{Context-tree distributions}
\kwd{Context-tree weighting}
\kwd{Model selection}
\end{keyword}

\end{frontmatter}

\section{Introduction}

Variable-Length Markov Chains (VLMCs), originally introduced by \citet{Riss}, provide a flexible framework for modeling discrete stochastic processes whose memory length varies according to the observed past.
Instead of assuming a fixed-order Markov dependence, VLMCs represent the dependence structure through context trees, yielding parsimonious models that automatically adapt their complexity to the underlying process.
Since their introduction, VLMCs have found successful applications in data compression, linguistics, neuroscience, and financial time series, while their statistical properties have been extensively investigated in the literature \citep{Buhl, collet2008random}.

Bayesian inference for context trees has benefited from a sequence of developments exploiting the recursive structure of the model.
The Context Tree Weighting (CTW) algorithm \citep{CTW} demonstrated that recursive computations can efficiently aggregate information over the exponentially large space of context trees, although it was originally proposed from a predictive rather than Bayesian perspective. Building on these ideas, Bayesian Context Trees (BCT) \citep{kontoyiannis2022bayesiancontexttreesmodelling} introduced an explicit Bayesian formulation, enabling exact computation of posterior quantities under suitable prior distributions. More recently, distributions on rooted trees based on branching probabilities \citep{Nakahara_2022} have considerably expanded the class of prior models available for tree structures.

These developments have established a powerful computational framework for Baye-sian inference over context trees.
However, they leave open a different methodological question that has received comparatively less attention: how should prior distributions over context trees be specified? Existing Bayesian formulations typically construct prior distributions through recursive branching probabilities.
Although this construction leads to elegant recursive algorithms, it provides only indirect control over the structural properties of the induced distribution over trees.
As a consequence, prior beliefs that arise naturally in applications, such as preferring shallow trees, encouraging sparse structures, concentrating probability around a particular depth or prohibiting certain structures, are difficult to express directly.

This limitation is not merely a matter of convenience, as prior distributions should reflect interpretable prior beliefs about the object of interest. For context-tree models, these beliefs are naturally formulated in terms of structural properties of the tree itself rather than local branching decisions. The current parametrizations therefore create a disconnect between prior elicitation and prior representation: the analyst thinks in terms of tree structures but is required to specify branching probabilities whose global consequences are often difficult to anticipate.

The goal of this work is to bridge this gap by introducing an alternative representation for prior distributions on context trees.
Instead of parametrizing priors through branching probabilities, we assign non-negative weights directly to the nodes of the tree, defining a class of \emph{context-tree functions} that naturally induces probability distributions over admissible context trees.
This representation provides a flexible language for constructing structurally meaningful priors while preserving the recursive structure required for efficient Bayesian computation.

We introduce a new representation of prior distributions over context trees that separates the specification of structural prior beliefs from the computational machinery used for Bayesian inference.
We establish the theoretical foundations of this representation, characterize the class of induced prior distributions, and show that several existing constructions, including CTW, BCT, and branching-process priors, arise naturally as particular instances of the proposed framework.
At the same time, the framework enables the construction of new classes of structurally interpretable priors that cannot be conveniently expressed using existing parametrizations.

Our second contribution is computational. We derive recursive algorithms for exact Bayesian inference under the proposed class of priors, including the computation of marginal likelihoods, exact posterior distributions, and maximum a posteriori trees.
An important consequence of this development is that exact marginal likelihoods become available for comparing competing prior specifications through Bayes factors, providing a principled methodology for both prior elicitation and model selection without sacrificing computational efficiency.

Finally, we investigate the practical implications of the proposed framework through simulation studies and a real-data application. The simulations assess the impact of different structural priors under controlled settings, while the financial market application illustrates how alternative prior specifications influence model evidence and predictive performance in practice.

The remainder of the paper is organized as follows.
Section \ref{sec-background} reviews Bayesian inference for context trees and discusses existing prior constructions.
Section \ref{sec-priors} introduces the proposed node-weighted representation and establishes its theoretical properties.
Section \ref{sec-inference} develops exact recursive algorithms for Bayesian inference under the proposed framework.
Section \ref{sec-applications} discusses practical uses of the proposed methods for model comparison via Bayes Factors and a procedure for selecting the maximal dependence range.
Sections \ref{sec-simulations} and \ref{sec-financial} present simulation studies and a financial market application, respectively.
Section \ref{sec-conclusion} concludes with a discussion of the proposed methodology and possible directions for future research. All proofs are given in \ref{appendix-A}.

\section{Bayesian Inference for Context Trees}\label{sec-background}

\subsection{Context Trees and Variable-Length Markov Chains}

Let $\mathcal{A} = \{0, 1, \dots, m-1\}$ be a finite alphabet with $m = |\mathcal{A}|$ symbols. We denote a string of symbols by $\mathbf{z}_{n_1}^{n_2} = z_{n_1} \dots z_{n_2} \in \mathcal{A}^{n_2 - n_1 + 1}$ and its length by $\ell(\mathbf{z}_{n_1}^{n_2}) = n_2 - n_1 + 1$, adopting the convention $\mathcal{A}^0 = \{\lambda\}$, where $\lambda$ is the empty string ($\ell(\lambda)=0$). The concatenation of two strings $\mathbf{z}$ and $\mathbf{y}$ is written as $\mathbf{zy}$. A string $\mathbf{s}$ of length $k$ is a \textit{suffix} of $\mathbf{z}_{1}^n$ if $\mathbf{s} = z_{n-k+1} \dots z_n$. A set of strings $\tau$ is \textit{proper} if no string in $\tau$ is a suffix of any other string in $\tau$. Moreover, a set $\tau$ is \textit{complete} if for any string $\mathbf{z}_1^n$ with $n \ge L+1$, there exists a suffix in $\tau$ that is a suffix of $\mathbf{z}_1^n$ \citep{CTW}.

The structure of a rooted tree \citep{gross2005graph} can be described via strings. By labeling the nodes using symbols from the alphabet $\mathcal{A}$, we can represent each node of the tree by the string corresponding to the path from that node to the root $\lambda$. In a \textit{full rooted tree}, each inner node $\mathbf{s}$ has exactly $m$ children, represented via concatenation as $k\mathbf{s}$ for $k \in \mathcal{A}$. Such a tree is uniquely characterized by its set of leaves (terminal nodes) $\tau$, which satisfies both the properness and completeness properties.

Throughout this work, every tree is a full rooted tree represented by its set of leaves $\tau$. We denote the set of all nodes of a tree by $\mathcal{N}(\tau)$ and the set of inner nodes by $\tau^c = \mathcal{N}(\tau) \setminus \tau$. The \textit{depth} of a tree is $\ell(\tau) = \max\{\ell(\mathbf{s}): \mathbf{s} \in \tau\}$. For a fixed maximal depth $L > 0$, we define the space of admissible trees as $\mathcal{T}_L = \{\tau: \ell(\tau) \leq L\}$. Any tree $\tau \in \mathcal{T}_L$ is a sub-tree of the \textit{maximal tree} $\tau_{\text{max}}$, whose leaves all have length $L$. 

By completeness, for any sequence $\mathbf{z}_1^n$ with $n \geq L+1$, there is a suffix in $\tau$. Since $\tau$ is proper, this suffix is unique; that is, given $\mathbf{z}_1^n$, there exists a unique $\mathbf{s} \in \tau$ that is a suffix of $\mathbf{z}_1^n$. This defines the \textit{suffix mapping function}
\begin{align*}
    \eta_\tau: \bigcup_{i=L+1}^\infty \mathcal{A}^i \rightarrow \tau,
\end{align*}
which maps $\mathbf{z}_{1}^{n}$ to its unique suffix $\mathbf{s} \in \tau$.

\begin{definition}
A \textit{context tree} is a pair $(\tau, \boldsymbol{\theta})$, where $\tau \in \mathcal{T}_L$ is a set of contexts (leaves) and
$$
\boldsymbol{\theta}=\{\boldsymbol{\theta}_{\mathbf{s}}:\mathbf{s}\in\tau\}
$$
is a collection of probability vectors on the $m$-simplex $\Delta_m$,
$\boldsymbol{\theta}_{\mathbf{s}} = \big[\theta_{\mathbf{s}}(0), \dots, \theta_{\mathbf{s}}(m-1)\big] \in \Delta_m$ such that $\sum_{k=0}^{m-1} \theta_{\mathbf{s}}(k) = 1$ and $\theta_{\mathbf{s}}(k) \ge 0$ for all $\mathbf{s} \in \tau$ and $k \in \mathcal{A}$.
\end{definition}

\begin{figure}[hb]
    \centering
    \begin{tikzpicture}[
    level distance=1.5cm,
    level 1/.style={sibling distance=2.5cm},
    level 2/.style={sibling distance=1.5cm},
    grow'=down,
    scale=0.6,
    edge from parent/.style={draw, thick},
    internal/.style={
        draw,
        circle,
        thick,
        fill=white,
        font=\small,
        align=center,
        minimum size=1pt,
        inner sep= 1.5pt
    },
    leaf/.style={
        draw,
        circle,
        thick,
        fill=black,
        font=\small,
        text=white,
        align=center,
        minimum size=1pt,
        inner sep=1.5pt
    }
]

\node[internal] (root) {$\lambda$}
    child { node[internal] {1}
        child { node[leaf] {1} }
        child { node[leaf] {0} }
    }
    child { node[leaf] {0} };

     \begin{scope}[every node/.style=    {text=black, font=\small}]

        \node at (root-1-1) [below=8pt] {$\boldsymbol{\theta}_{11}$};

        \node at (root-1-2) [below=8pt] {$\boldsymbol{\theta}_{01}$};

        \node at (root-2) [below=8pt] {$\boldsymbol{\theta}_{0}$};

    \end{scope}

\end{tikzpicture}
    \caption{Example of a context tree $(\tau, \boldsymbol{\theta})$ for a binary alphabet. Filled nodes represent contexts $\tau = \{0, 01, 11\}$, and each $\boldsymbol{\theta}_{\mathbf{s}}$ specifies the transition probabilities conditioned on context $\mathbf{s}$. Unfilled nodes represent inner nodes $\tau^c = \{\lambda, 1\}$, illustrating the underlying tree structure.}
    \label{contexttree}
\end{figure}

Variable-length Markov chains (VLMCs) generalize standard Markov chains by allowing the memory length to depend on the history via these contexts \citep{Buhl}. The intuition is that the conditional distribution of the next symbol depends on a suffix of the past whose length is itself data-dependent. The context tree encodes the relevant suffixes together with their associated transition probability vectors.

\begin{definition}
A stationary process $\mathbf{Z}=(Z_t)_{t=1}^n$ taking values in $\mathcal{A}$ is a VLMC compatible with the context tree $(\tau,\boldsymbol{\theta})$ if, for all $L<t\le n$ and $\mathbf{z}_{1}^{t-1}\in\mathcal{A}^{t-1}$,
\begin{align}
\label{vlmc}
    p(Z_t=k\mid \mathbf{Z}_{1}^{t-1}=\mathbf{z}_{1}^{t-1})
    =
    \theta_{\eta_\tau(\mathbf{z}_{1}^{t-1})}(k).
\end{align}
\end{definition}

By treating the first $L$ symbols of a realization $\mathbf{z}$ as fixed initial conditions, the likelihood under the VLMC model is
\begin{align}
\label{likeli}
p(\mathbf{z}\mid\tau,\boldsymbol{\theta})
=
\prod_{\mathbf{s}\in\tau}
\prod_{k\in\mathcal{A}}
\theta_{\mathbf{s}}(k)^{c_{\mathbf{s}k}(\mathbf{z})},
\end{align}
where $c_{\mathbf{s}k}(\mathbf{z})
=
\sum_{t=L+1}^{n}
\mathbf{1}
\left\{
z_t=k,\,
\eta_\tau(\mathbf{z}_1^{t-1})=\mathbf{s}
\right\}$ counts the number of occurrences of symbol $k$ following context $\mathbf{s}$ in the sequence $\mathbf{z}$.

\subsection{Bayesian Formulation}

We now introduce a Bayesian formulation for context trees. In this framework, both the tree structure $\tau$ and the transition probability vectors $\boldsymbol{\theta}$ are treated as unknown quantities and assigned prior distributions. The resulting hierarchical model is
\begin{align*}
    \tau &\sim \pi(\tau), \hspace{4.6cm} \tau \in \mathcal{T}_L, \\ 
    \boldsymbol{\theta} \mid \tau &\sim  \pi(\boldsymbol{\theta} \mid \tau), \hspace{4.088cm} \boldsymbol{\theta} \in \Delta_m^{|\tau|}, \\ 
    \mathbf{Z} \mid \tau, \boldsymbol{\theta} &\sim p(\mathbf{z} \mid \tau, \boldsymbol{\theta}), \hspace{3.795cm} \mathbf{z} \in \mathcal{A}^n.
\end{align*}
where the likelihood is given by \eqref{likeli}.

Throughout this work, our primary interest is the tree structure $\tau$. The transition probability vectors are therefore regarded as nuisance parameters and integrated out analytically. This results in the marginal likelihood of the observed sequence under a given tree,
\begin{align}
\label{eq:marginal}
p(\mathbf{z}\mid\tau)
=
\int
p(\mathbf{z}\mid\tau,\boldsymbol{\theta})
\pi(\boldsymbol{\theta}\mid\tau)
\,d\boldsymbol{\theta}.
\end{align}
Applying Bayes' theorem, the posterior distribution over context trees is
\begin{align}
\label{eq:posterior}
\pi(\tau\mid\mathbf{z})
=
\frac{\pi(\tau)\,p(\mathbf{z}\mid\tau)}
{p(\mathbf{z})},
\end{align}
where
\begin{align}
\label{eq:evidence}
p(\mathbf{z})
=
\sum_{\tau\in\mathcal{T}_L}
\pi(\tau)\,
p(\mathbf{z}\mid\tau)
\end{align}
is the marginal likelihood (or evidence) of the observed sequence.

A standard choice for the conditional prior on the transition probabilities is to assign independent symmetric Dirichlet distributions to the probability vector associated with each context. Specifically,
\begin{align}
\label{eq:dirprior}
\pi_\alpha(\boldsymbol{\theta}\mid\tau)
=
\prod_{\mathbf{s}\in\tau}
\frac{\Gamma(m\alpha)}
{\Gamma(\alpha)^m}
\prod_{k=0}^{m-1}
\theta_{\mathbf{s}}(k)^{\alpha-1},
\end{align}
where $\alpha>0$ is a hyperparameter. This specification ensures an efficient computation of the marginal likelihood of the observed sequence under a given tree, as the product of Dirichlet distributions conjugates to the likelihood function.

\begin{lemma}
\label{lem:marginal}
Under the prior \eqref{eq:dirprior} with hyperparameter $\alpha>0$, the marginal likelihood of $\mathbf{z}$ under tree $\tau$ is given by
\begin{align}\label{eq:margtree}
p_\alpha(\mathbf{z}\mid\tau)
=
\prod_{\mathbf{s}\in\tau}
\frac{\Gamma(m\alpha)}
{\Gamma(\alpha)^m}
\frac{
\prod_{k=0}^{m-1}
\Gamma\!\left(c_{\mathbf{s}k}(\mathbf{z})+\alpha\right)
}{
\Gamma\!\left(
\sum_{k=0}^{m-1}
c_{\mathbf{s}k}(\mathbf{z})
+m\alpha
\right)
}.
\end{align}
\end{lemma}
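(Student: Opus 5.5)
The plan is a direct computation based on Dirichlet–multinomial conjugacy. It relies on the fact that both the likelihood \eqref{likeli} and the prior \eqref{eq:dirprior} factorize over the contexts $\mathbf{s}\in\tau$.

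First, substitute \eqref{likeli} and \eqref{eq:dirprior} into the marginal likelihood \eqref{eq:marginal}. This gives an integrand of the form
\begin{align*}
\prod_{\mathbf{s}\in\tau}\frac{\Gamma(m\alpha)}{\Gamma(\alpha)^m}\prod_{k=0}^{m-1}\theta_{\mathbf{s}}(k)^{c_{\mathbf{s}k}(\mathbf{z})+\alpha-1}.
\end{align*}
The integration domain is the product of simplices $\Delta_m^{|\tau|}$, and the integrand is a product of functions each depending on a single $\boldsymbol{\theta}_{\mathbf{s}}$. Fubini--Tonelli therefore applies, since everything is nonnegative. The integral splits into a product over $\mathbf{s}\in\tau$ of integrals over $\Delta_m$.

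Second, evaluate each factor with the Dirichlet normalizing constant (the multivariate Beta integral):
\begin{align*}
\int_{\Delta_m}\prod_{k=0}^{m-1}\theta(k)^{a_k-1}\,d\theta=\frac{\prod_{k}\Gamma(a_k)}{\Gamma\!\left(\sum_k a_k\right)},\qquad a_k>0.
\end{align*}
Take $a_k=c_{\mathbf{s}k}(\mathbf{z})+\alpha$. These are positive because $\alpha>0$ and the counts are nonnegative integers. Then $\sum_k a_k=\sum_k c_{\mathbf{s}k}(\mathbf{z})+m\alpha$.

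Multiplying by the prior constants $\Gamma(m\alpha)/\Gamma(\alpha)^m$ and taking the product over contexts gives exactly \eqref{eq:margtree}.

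The only point needing care is the measure on $\Delta_m$. The integral is understood with respect to Lebesgue measure on the first $m-1$ coordinates, which is the same measure under which \eqref{eq:dirprior} is a density. The same convention must be used in the Beta integral identity. If desired, that identity can be cited as standard, or derived by induction on $m$ using the univariate Beta function $B(a,b)=\Gamma(a)\Gamma(b)/\Gamma(a+b)$. Otherwise the argument is routine.
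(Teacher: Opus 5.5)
Your proposal is correct and follows essentially the same route as the paper's proof: substitute the likelihood and the Dirichlet prior, factor the integral over $\Delta_m^{|\tau|}$ into a product of integrals over $\Delta_m$, and evaluate each one through the Dirichlet normalizing constant with parameters $c_{\mathbf{s}k}(\mathbf{z})+\alpha$. Your remarks on Fubini--Tonelli, the positivity of the parameters, and the reference measure on the simplex are details the paper leaves implicit.
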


The marginal likelihood \eqref{eq:evidence} plays a central role in Bayesian inference. Besides defining the posterior distribution, it provides the basis for Bayesian model comparison through Bayes factors and quantifies the predictive probability assigned by the model to the observed sequence. Through the well-known connection between Bayesian prediction and universal source coding, its logarithm is also directly related to data compression. Consequently, the ability to compute the marginal likelihood is of fundamental importance.

The main difficulty lies in the fact that the expression in \eqref{eq:evidence} involves a sum over all trees in $\mathcal{T}_L$, whose cardinality grows doubly exponentially with the maximal depth ($O(m^{m^L})$). As a result, direct computation becomes intractable even for moderate values of $L$. In fact, the number of trees in $\mathcal{T}_L$ satisfies the recurrence relation $|\mathcal{T}_L| = |\mathcal{T}_{L-1}|^m + 1$, with $|\mathcal{T}_0| = |\{\lambda\}| = 1$. Then, for a simple binary alphabet $(|\mathcal{A}| = 2)$ and $L = 10$, the number of trees is approximately $|\mathcal{T}_{10}| \approx 1.44 \times 10^{217}$.

Essentially, the choice of the prior distribution $\pi(\tau)$ is intimately connected to the tractability of the marginal likelihood computation. In what follows, we review existing prior distributions and explain how different formulations address this computational challenge.

\subsection{Existing Prior Distributions}

The specification of the prior distribution $\pi(\tau)$ over the space of trees $\mathcal{T}_L$ remains the key modeling component. Since the posterior distribution, the marginal likelihood and, consequently, most Bayesian model comparison depend critically on this prior, its definition has a fundamental impact on inference.
Several approaches have been proposed in the literature, differing primarily in how structural regularization is imposed on $\tau$.

A first class of models is based on \textit{Context Tree Weighting} (CTW) and related constructions \citep{CTW}. In these approaches, the prior over trees is implicitly defined through a recursive mixture over local branching decisions. The resulting distribution can be interpreted as being induced by a stochastic process that generates trees top-down, where each node independently decides whether to stop or to branch into its $m$ children with probability $1/2$.
This class has the advantage of allowing efficient recursive computation of the marginal likelihood, but it restricts the form of admissible priors to those compatible with a fixed local branching mechanism.

A closely related family is given by Bayesian extensions of CTW, referred to as \textit{Bayesian Context Trees} (BCT) \citep{kontoyiannis2022bayesiancontexttreesmodelling}.
These methods also rely on recursive decompositions of the tree space induced by a branching process with a fixed splitting probability that can be different from $1/2$. In this setting, the prior on $\tau$ can be viewed as the distribution of a Galton-Watson-type process constrained to the finite-depth space $\mathcal{T}_L$ \citep{10.1214/23-BA1362}. While this yields exact and computationally efficient inference, the resulting prior is determined by a single parameter and cannot accommodate more detailed structural preferences.

\citet{Nakahara_2022} propose a class of distributions for rooted trees induced by branching-type processes in which the probability of splitting or stopping may depend on the node itself, allowing for greater heterogeneity across the tree structure. While this added flexibility enables richer structural modeling, the resulting prior still retains a locally recursive form, which is crucial for computational tractability but limits the ability to encode global structural preferences directly.

A more flexible alternative is to assign a prior directly over $\mathcal{T}_L$ without enforcing a recursive factorization.
For instance, \citet{VictorNancy} consider general priors over the space of context trees and perform posterior inference via Markov chain Monte Carlo methods. This approach allows for greater modeling flexibility, but inference relies on simulation-based methods and does not provide closed-form expressions for the marginal likelihood or exact recursive computation.

More generally, all existing approaches can be interpreted as defining a joint model similar to the one described in the previous section,
but they differ in the structural restrictions imposed on $\pi(\tau)$. In particular, tractable methods rely on strong factorizability assumptions, while more general priors sacrifice computational tractability.
This tension between structural flexibility and computational efficiency motivates the need for alternative constructions that allow richer specifications of $\pi(\tau)$ while still preserving exact or recursive computation of the marginal likelihood, as developed in the following sections.

\subsection{Structural Constraints and Prior Expressiveness}

A central challenge in specifying priors over context trees lies in the inherent complexity of the space $\mathcal{T}_L$, which encodes combinatorial and hierarchical structures. While existing approaches provide computationally convenient constructions, they differ in the extent to which they allow direct control over structural properties of the tree.

Most standard approaches define $\pi(\tau)$ indirectly through local probabilistic mechanisms, typically associated with node-wise branching decisions. In such constructions, the global shape of the tree emerges from decisions at each node, rather than being specified through global structural constraints.
As a consequence, it is difficult to directly encode prior beliefs about global characteristics of the tree, such as its overall depth distribution, sparsity patterns, or renewal-type behavior. Instead, such properties are only implicitly induced by the choice of local branching probabilities, often in a manner that is difficult to interpret or calibrate.

This construction based on local decisions is closely related to the recursive structure that enables efficient computation of marginal likelihoods. As a consequence, substantially different global tree structures may arise from very similar local branching specifications, limiting the ability to control global properties of the model.

These limitations suggest a gap between the way structural assumptions are naturally formulated in applications and the way they are encoded in existing Bayesian constructions. The next section introduces a framework that addresses this gap by providing a more direct representation of structural information on context trees.

\section{Node-Weighted Structural Priors}\label{sec-priors}

While the parametrization based on branching probabilities enable efficient recursive computation, it provides only indirect control over the structural properties of the induced distribution.
As a consequence, expressing prior beliefs about tree characteristics often requires reasoning about the cumulative effect of many local branching decisions.

Our approach is based on, instead of specifying how a tree is generated, specifying how probable each tree should be.
Since every context tree is uniquely determined by its set of terminal contexts, it is natural to associate a non-negative weight with each possible context and let the plausibility of a tree be determined by the weights of the contexts it contains.
In order to achieve this, we assign a local weight to every node and obtain the score of a tree by combining the weights of its leaves. Trees containing highly weighted contexts receive larger scores, whereas trees containing less desirable contexts are penalized. After normalization, these scores define a probability distribution over the space of admissible context trees.

This representation has two important advantages. First, it allows structural prior information to be specified directly in terms of the contexts themselves, rather than through an implicit branching mechanism. Second, as we show in Section \ref{sec-inference}, it preserves the recursive structure that makes exact Bayesian inference computationally feasible.

\subsection{Context-Tree Functions}

Let $w:\mathcal{N}(\tau_{\max}) \rightarrow \mathbb{R}_0^+$
be a non-negative node-weight function assigning a weight to every node of the maximal tree.
Given a context tree $\tau$, we define its \emph{tree score} as
$$
W(\tau)=\prod_{\mathbf{s}\in\tau}w(\mathbf{s}).
$$
The score measures the structural plausibility of the tree according to the chosen node weights. Larger values indicate structures that are considered more plausible a priori.
Normalizing these scores over the space $\mathcal T_L$ yields the prior distribution
\begin{align} \label{eq:treeprior}
    \pi_W(\tau)=
\frac{W(\tau)}
{\sum_{\tau'\in\mathcal T_L}W(\tau')}.
\end{align}
We refer to any function $W$ admitting this factorization as a \emph{context-tree function}. Throughout the paper, context-tree functions constitute the basic representation used to construct prior distributions over context trees.

\subsection{Examples}

One of the main advantages of the proposed representation is that a wide variety of structurally different priors can be obtained simply by modifying the node-weight function. Rather than developing a new prior from scratch, one can specify only the local weights, from which the corresponding distribution over context trees follows automatically. 

Our examples of weight functions can be organized into three general classes: those based on constant weights, those based on indicator constraints, and those based on node length. Table~\ref{table_ex} summarizes several representative examples. The notation and interpretation of the resulting prior distributions are explored in detail below. Throughout, we use the same function name for the weight function $w$ and for the corresponding context-tree function $W$. 

\renewcommand{\arraystretch}{1.7}
\begin{table}[h!]
    \centering
    \small{
    \caption{Examples of weight functions and corresponding tree scores} 
    \label{table_ex}
    \vspace{0.2cm}
    \begin{tabular}{c|c|c}
    \toprule
    \textbf{Function name} & $w(\mathbf{s})$ & $W(\tau)$ \\ \midrule
    Unity  & $1$ & $1$ \\ \hline
    $\beta$-constant & $\beta$ & $\beta^{|\tau|}$ \\ \hline
    $a$-renewal indicator & 
    $\begin{cases}
        0 & \text{if } s_j = a \text{ for some } j = 1, \dots t-1 \\
        1 & \text{otherwise}.
    \end{cases}$ & $\mathbf{1}\{\tau \in \mathcal{T}_L^a\}$ \\ \hline
     $l$-$m$-depth indicator & $\mathbf{1}\{l \leq \ell(\mathbf{s}) \leq m\}$ & $\mathbf{1}\{\tau \in \mathcal{T}_m \setminus \mathcal{T}_l\}$ \\ \hline
    $g$-length-based & $g(\ell(\mathbf{s}))$ & $\displaystyle{\prod_{\mathbf{s} \in \tau} g(\ell(\mathbf{s}))}$ \\ \hline
    $\beta$-exponential & $\exp(\beta \ell(\mathbf{s}))$ & $\displaystyle{\prod_{\mathbf{s} \in \tau} \exp(\beta \ell(\mathbf{s}))}$ \\ \hline
    CTW  & 
    $\begin{cases}
        1/4 & \text{if } \ell(\mathbf{s}) < L \\
        1/2 & \text{if } \ell(\mathbf{s}) = L.
    \end{cases}$ & $(1/2)^{|\tau| + M(\tau)}$ \\ \hline
    $\beta$-BCT  & 
    $\begin{cases}
        (1-\beta)^{1/(m-1)}\beta \quad &\text{if} \ \ell(\mathbf{s})<L \\
        (1-\beta)^{1/(m-1)} & \text{if} \ \ell(\mathbf{s})=L. 
    \end{cases}$ & $(1-\beta)^{\frac{|\tau|}{m-1}} \beta^{M(\tau)}$ \\ \hline
    $\beta$-target $l$-depth  & $\beta^{-|\ell(\mathbf{s}) - l|}$ & $\displaystyle{\prod_{\mathbf{s} \in \tau} \beta^{-|\ell(\mathbf{s}) - l|}}$ \\
    \bottomrule
    \end{tabular}}
\end{table}

The simplest example, the unity function, is characterized by assigning unit weights to every node, producing the uniform distribution over $\mathcal{T}_L$. A generalization of this is obtained by setting a constant weight $\beta > 0$, yielding the $\beta$-constant function. Choosing weights smaller than one ($\beta < 1$) penalizes large trees and favors simpler structures, while weights greater than one ($\beta > 1$) favor deeper trees.

More elaborate constructions can also be made, such as constraints based on the path of the nodes. For instance, the $a$-renewal indicator function sets non-zero weights only for nodes that do not contain the symbol $a \in \mathcal{A}$ internally. Consequently, it induces a uniform distribution on the subset of $a$-renewing trees $\mathcal{T}_L^a$ (see \cite{VictorNancy} for details).
As an example, \cite{Galves_2012} uses VLMC to model linguistic rhythm in written texts,
considering a specific symbol to represent the end of sentences as a renewal symbol,
what allows them to obtain independent blocks of symbols to be used in a bootstrap scheme.
In a Bayesian construction, this assumption could be directly encoded in prior distribution through the $a$-renewal indicator function.

Another type of possible restriction is based on the length of the contexts (or depth of the trees). The $l$-$m$-depth indicator function assigns non-zero weights only to nodes $\mathbf{s}$ whose lengths satisfy $0 \le l \le \ell(\mathbf{s}) \le m \le L$. This produces a uniform distribution on $\mathcal{T}_m \setminus \mathcal{T}_l$, excluding trees whose depths fall outside the specified range,
which may be useful in applications where the relevant memory length is known to lie within a specific interval.

More generally, we can also assign different weights based on node length. That is, for any function $g: \mathbb{Z}_0^+ \to \mathbb{R}_0^+$, the $g$-length-based function sets the weight of a node $\mathbf{s}$ according to $g$ evaluated at its length $\ell(\mathbf{s})$. Examples of such functions include the $\beta$-exponential ($\beta \in \mathbb{R}$), the CTW, the $\beta$-BCT ($\beta \in (0, 1)$), where $M(\tau)$ denotes the number of contexts with depth strictly less than $L$, and the $\beta$-target $l$-depth ($\beta > 0$, $0 \le l \le L$). 

The CTW and $\beta$-BCT functions induce priors already established in the literature \citep{CTW, kontoyiannis2022bayesiancontexttreesmodelling}, and correspond to special cases of distributions generated by branching processes (see Subsection \ref{sec:branching}). The $\beta$-target $l$-depth deserves special attention. It is constructed to ensure that the resulting distribution favors trees whose contexts have lengths close to $l$. The parameter $\beta$ controls how much less likely trees with depths differing from $l$ are. Larger values of $\beta$ lead to distributions that are highly concentrated around the maximal tree of depth $l$. Therefore, its application may be useful for problems that exhibit a known cyclic pattern, such as patterns that are likely to repeat every $l$ steps.

These examples show that the proposed framework unifies a broad class of existing constructions while substantially simplifying the specification of new structural priors.

\subsection{Properties}

In this subsection we describe some direct properties of context-tree functions.

\begin{property}[Product]
     The class of context-tree functions is closed under finite products. Indeed, if $w_1$ and $w_2$ are node-weight functions, then defining $w(\mathbf{s}) = w_1(\mathbf{s}) w_2(\mathbf{s})$ induces a context-tree function with associated tree score $W(\tau) = W_1(\tau) W_2(\tau)$.
\end{property}

However, note that if a product of functions is a context-tree function, it does not necessarily follow that each factor is itself a context-tree function. For instance, consider the constant functions $F(\tau) = 1/2$ and $G(\tau) = 2$. Their product $(FG)(\tau) = 1$ is a context-tree function (the unity), but $F$ and $G$ are not.


\begin{property}[Rescaling]
    If $w'(\mathbf{s}) = c\, w(\mathbf{s})$ for some constant $c>0$, then $W'(\tau) = c^{|\tau|} W(\tau)$.
\end{property}

Therefore, uniform rescaling of node weights modifies the induced distribution over trees through a factor that depends on the number of leaves. This implies that the scale of $w$ plays an essential role in controlling the preference over tree sizes, with $c>1$ favoring larger trees and $c<1$ favoring smaller ones. 

\begin{property}[Non-uniqueness]
    The node-weight representation $\pi_W$ of a distribution is not unique. 
\end{property}

Distinct node-weight functions may induce the same distribution over $\mathcal{T}_L$ when they differ by transformations that preserve relative tree scores. For instance, consider $\mathcal{T}_1$, which contains only two trees: the tree with a single root node and the fully expanded tree with $m$ leaves. If $w(\lambda) = c$ and $w(k) = c^{1/m}$ for all children $k \in \mathcal{A}$, then the induced distribution is uniform over $\mathcal{T}_1$, a property that is also obtained by the unity function $w'(\mathbf{s}) = 1$.

\subsection{Connection with Branching Processes} \label{sec:branching}

Prior distributions based on context-tree functions $\pi_W$ are closely related to distributions induced by branching processes, as both construct tree probabilities through multiplicative node-level operations. In the latter, however, the distribution is described by a collection of \textit{branching probabilities} that characterize node-branching decisions, rather than by a node-weight function.

More precisely, consider a collection $\{b_{\mathbf{s}}: \mathbf{s} \in \mathcal{N}(\tau_{\text{max}})\}$, where $b_{\mathbf{s}} \in [0, 1]$ represents the \textit{branching probability} of a node $\mathbf{s}$, i.e. the probability that $\mathbf{s}$ have children. Then, the mass assigned to a tree $\tau \in \mathcal{T}_L$ is
\begin{align*}
    \pi_{\{b_{\mathbf{s}}\}}(\tau) = \prod_{\mathbf{s} \in \tau} (1-b_{\mathbf{s}}) \prod_{\mathbf{s}^\prime \in \tau^c} b_{\mathbf{s}^\prime}.
\end{align*}
In simple terms, the probability of a tree is the product of the branching probabilities associated with the nodes that branch and the stopping probabilities associated with the nodes that do not branch. It is shown in \cite{Nakahara_2022} that this construction in fact defines a probability distribution on $\mathcal{T}_L$. For instance, the $\beta$-BCT distribution is a special case with branching probability $b_{\mathbf{s}} = (1-\beta)$ for all $\mathbf{s} \in \mathcal{N}(\tau_{\text{max}})$.

The connection between $\pi_W$ and $\pi_{\{b_{\mathbf{s}\}}}$ lies in the fact that they are distinct parameterizations of the same probability distribution. In particular, for every context-tree function $W$, one can construct a unique set of branching probabilities $\{b_{\mathbf{s}}\}$ that yields $\pi_W$. Conversely, any collection of branching probabilities specifies a context-tree function whose induced distribution coincides with $\pi_{\{b_{\mathbf{s}}\}}$.

To establish this equivalence, first, given a context-tree function $W$, for $\mathbf{s} \in \mathcal{N}(\tau_{\text{max}})$, we define the branching probability $b_\mathbf{s}$ such that
\begin{align}
\label{branch}
(1-b_{\mathbf{s}}):=\frac{w(\mathbf{s})}{\Sigma_W(\mathbf{s})},
\end{align}
where $\Sigma_W(\mathbf{s})$ denotes the sum of the contributions of the function $W$ over all sub-trees rooted at the node $\mathbf{s}$. Function $\Sigma_W$ will be more detailed in Subsection \ref{subsec:recur}. The choice of the ratio in \eqref{branch} is natural, since it represents the relative contribution of the tree that stops at $\mathbf{s}$ with respect to the total contribution of all possible sub-trees below $\mathbf{s}$, and hence can be interpreted as the probability of terminating the branching process at node $\mathbf{s}$. 

On the other hand, given a collection of branching probabilities $\{b_{\mathbf{s}}: \mathbf{s} \in \mathcal{N}(\tau_{\text{max}})\}$, we consider the weight function as
    \begin{align} \label{carac-naka}
    w^\prime(\mathbf{s}) := (1-b_{\mathbf{s}}) \prod_{\mathbf{s}^\prime \in A(\mathbf{s})} (b_{\mathbf{s}^\prime})^{m^{\ell(\mathbf{s}^\prime) - \ell(\mathbf{s})}},
    \end{align}
where $A(\mathbf{s})$ represents the set of ancestor nodes of  node $\mathbf{s}$. The weight assigned to a node $\mathbf{s}$ in \eqref{carac-naka} combines the stopping probability of $\mathbf{s}$ with the contributions of the branching probabilities of its ancestors (the inner nodes on the path from the root to $\mathbf{s}$), accounting for their multiplicities through the depth of $\mathbf{s}$ and the size of the alphabet considered. 

This construction establishes the equivalence of the induced tree distributions. For any $\tau \in \mathcal{T}_L$, we have:
\begin{equation*}
\pi_W(\tau) = \frac{W(\tau)}{\sum_{\tau^\prime \in \mathcal{T}_L} W(\tau^\prime)} = \pi_{\{b_{\mathbf{s}}\}}(\tau) = \prod_{\mathbf{s} \in \tau} w^\prime(\mathbf{s}) = W^\prime(\tau) = \pi_{W^\prime}(\tau).
\end{equation*}

Note that this equivalence also illustrates the non-uniqueness property of $\pi_W$. The weight function defined in \eqref{carac-naka} may induce a context-tree function $W^\prime$ that differs from $W$, meaning that $\pi_W$ and $\pi_{W^\prime}$ represent the same prior distribution under different node-weight parameterizations.

\section{Exact Bayesian Inference}\label{sec-inference}

The preceding sections established the Bayesian formulation for context trees and introduced context-tree functions as a flexible representation of prior distributions. We now turn to the computational consequences of this representation.

The key observation is that the Dirichlet-integrated likelihood
factorizes over the leaves of a context tree, just as the prior weight function does. Consequently, their product is again a context-tree function. This factorization allows sums over the complete tree space $\mathcal{T}_L$ to be evaluated recursively, without enumerating its elements. In particular, the proposed framework provides exact recursive
procedures for computing the marginal likelihood, the posterior
distribution over trees, and a maximum a posteriori tree. These
procedures all rely on the same recursion, where sums are
used for marginal likelihoods and posterior normalization.

Under the symmetric Dirichlet prior in \eqref{eq:dirprior}, define, for
each node $\mathbf{s}\in\mathcal{N}(\tau_{\text{max}})$, its contribution to the
integrated likelihood as
\begin{align*}
    q_{\alpha}(\mathbf{s}, \mathbf{z}) = \frac{\Gamma(m \alpha)}{\Gamma(\alpha)^m} \frac{\prod_{k=0}^{m-1} \Gamma \big(c_{\mathbf{s}k}(\mathbf{z}) + \alpha\big)}{\Gamma \left(\sum_{k=0}^{m-1} c_{\mathbf{s}k}(\mathbf{z}) + m\alpha \right)}.
\end{align*}
It follows that $Q_\alpha(\tau, \mathbf{z}) = \prod_{\mathbf{s} \in \tau} q_{\alpha}(\mathbf{s}, \mathbf{z}) = p_\alpha(\mathbf{z} \mid \tau)$ is itself a context-tree function.

The marginal likelihood can thus be expressed as:
\begin{align}\label{mod_evi}
    p(\mathbf{z}; W, \alpha) = \sum_{\tau \in \mathcal{T}_L} \pi_W(\tau) p_{\alpha}(\mathbf{z} \mid \tau) = \frac{\sum_{\tau \in \mathcal{T}_L} W(\tau) Q_{\alpha} (\tau, \mathbf{z}) }{\sum_{\tau^\prime \in \mathcal{T}_L} W(\tau^\prime)},
\end{align}
where $WQ_{\alpha}$ is a context-tree function by the product property. Furthermore, by Bayes' theorem, the posterior distribution over trees $\tau$ is given by:
\begin{align} \label{postdist}
    \pi_{W, \alpha}(\tau \mid \mathbf{z}) = \frac{\pi_W(\tau) p_{\alpha}(\mathbf{z} \mid \tau)}{p(\mathbf{z}; W, \alpha)} = \frac{W(\tau)Q_\alpha(\tau, \mathbf{z})}{\sum_{\tau^\prime \in \mathcal{T}_L} W(\tau^\prime) Q_\alpha(\tau^\prime, \mathbf{z})}.
\end{align}
That is, the posterior distribution is directly proportional to the product context-tree function $WQ_\alpha$.

The property above is similar to saying that $\pi_W(\tau)$ and $\pi_{W, \alpha}(\tau \mid \mathbf{z})$ are conjugate distributions with respect to the likelihood function. It also allows us to recover the branching-process representation of the posterior distribution through \eqref{branch}, generalizing the results of \cite{10.1214/23-BA1362} and producing a simple method for obtaining posterior samples.

Equations \eqref{mod_evi} and \eqref{postdist} establish a central theoretical result: evaluating key Bayesian quantities reduces to computing sums of context-tree functions over the tree space $\mathcal{T}_L$. In the following subsections, we demonstrate that these sums can be computed efficiently, preserving the computational tractability. Furthermore, we show that these representations naturally lead to a method for obtaining point estimators.

\subsection{Recursive Computation} \label{subsec:recur}

Over the years, several methods have been developed to compute sums of functions defined on trees from quantities associated with their nodes. In the context of Bayesian coding, examples include the CTW algorithm proposed by \cite{CTW} and the algorithm of \cite{394633}, while a more recent approach in terms of probability distributions on tree space was proposed by \cite{Nakahara_2022}. In Theorem \ref{theo1} we demonstrate that the recursive mechanism underlying these methods can be used to compute sums over $\mathcal{T}_L$ of general context-tree functions.  

\begin{theorem} \label{theo1}
    Let $W: \mathcal{T}_L \rightarrow \mathbb R$ be a context-tree function. Then, the summation:
    \begin{align*}
        \sum_{\tau \in \mathcal{T}_L} W(\tau) = \sum_{\tau \in \mathcal{T}_L} \prod_{\mathbf{s} \in \tau} w(\mathbf{s}),
    \end{align*}
    can be obtained by the following procedure:
    \begin{enumerate}
        \item Starting at the leaves and proceeding towards the root $\lambda$: compute the value $\Sigma_W(\mathbf{s})$ for each node $\mathbf{s}\in\mathcal{N}(\tau_{\mathrm{max}})$ according to:
        \begin{align*}
            \Sigma_W(\mathbf{s}) = 
            \begin{cases}
            w(\mathbf{s}), & \text{if } \ell(\mathbf{s}) = L,\\[6pt]
            w(\mathbf{s}) + \displaystyle\prod_{k=0}^{m-1}\Sigma_W(k\mathbf{s}), & \text{if } \ell(\mathbf{s}) < L.
            \end{cases}
        \end{align*}
        \item After all nodes have been evaluated, the desired summation is obtained at the root $$\Sigma_W(\lambda) = \sum_{\tau\in\mathcal{T}_L} W(\tau).$$
    \end{enumerate}
\end{theorem}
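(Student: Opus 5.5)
The plan is to prove a stronger, node-local statement by induction on height, and then read off the claim at the root. For each node $\mathbf{s}\in\mathcal{N}(\tau_{\max})$, let $\mathcal{T}_L(\mathbf{s})$ be the set of full sub-trees rooted at $\mathbf{s}$ whose leaves have length at most $L$. Each such sub-tree is identified with its leaf set $\sigma$, a set of strings of the form $\mathbf{u}\mathbf{s}$. We will show
\begin{align*}
\Sigma_W(\mathbf{s}) = \sum_{\sigma\in\mathcal{T}_L(\mathbf{s})}\prod_{\mathbf{u}\in\sigma} w(\mathbf{u}).
\end{align*}
The root satisfies $\mathcal{T}_L(\lambda)=\mathcal{T}_L$, so this identity at $\mathbf{s}=\lambda$ is exactly the statement of Theorem~\ref{theo1}. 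This also makes precise the informal description of $\Sigma_W$ used in \eqref{branch}.

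The induction runs on $h=L-\ell(\mathbf{s})$. In the base case $h=0$, so $\ell(\mathbf{s})=L$ and the only admissible sub-tree is $\{\mathbf{s}\}$; the sum equals $w(\mathbf{s})$, which agrees with the first case of the recursion. For the inductive step, take $\ell(\mathbf{s})<L$ and use the structural decomposition on which everything rests. Every $\sigma\in\mathcal{T}_L(\mathbf{s})$ is either the trivial tree $\{\mathbf{s}\}$, or $\mathbf{s}$ is an inner node with exactly $m$ children $k\mathbf{s}$. In the second case $\sigma=\bigcup_{k=0}^{m-1}\sigma_k$ with $\sigma_k\in\mathcal{T}_L(k\mathbf{s})$, the union being disjoint.

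The map $\sigma\mapsto(\sigma_0,\dots,\sigma_{m-1})$ is a bijection between the non-trivial trees in $\mathcal{T}_L(\mathbf{s})$ and $\prod_k \mathcal{T}_L(k\mathbf{s})$. To check this, note first that each $\sigma_k$ is proper and complete for strings ending in $k\mathbf{s}$, since $\sigma$ is proper and complete and strings ending in $k\mathbf{s}$ are exactly those routed through child $k$. Conversely, any tuple of sub-trees glues into a proper, complete leaf set of depth at most $L$.

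Because the leaf sets $\sigma_k$ are disjoint, the score factorizes as $\prod_{\mathbf{u}\in\sigma}w(\mathbf{u})=\prod_k\prod_{\mathbf{u}\in\sigma_k}w(\mathbf{u})$. Summing over the product set and applying distributivity (a sum over a Cartesian product of products equals the product of the sums) gives
\begin{align*}
\sum_{\sigma\in\mathcal{T}_L(\mathbf{s})}\prod_{\mathbf{u}\in\sigma}w(\mathbf{u}) = w(\mathbf{s}) + \prod_{k=0}^{m-1}\sum_{\sigma_k\in\mathcal{T}_L(k\mathbf{s})}\prod_{\mathbf{u}\in\sigma_k}w(\mathbf{u}).
\end{align*}
By the inductive hypothesis, applied to the children at height $h-1$, each inner sum equals $\Sigma_W(k\mathbf{s})$, which matches the second case of the recursion.

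The main obstacle is not the algebra but stating the bijection cleanly in the paper's string and suffix language. Specifically, one must verify that properness and completeness pass from $\sigma$ to the pieces $\sigma_k$ and back, and that the trivial tree is not double counted, since it is the only tree in which $\mathbf{s}$ is a leaf. I would handle this with the recurrence $|\mathcal{T}_L|=|\mathcal{T}_{L-1}|^m+1$ already noted in the paper as a sanity check: it is precisely the cardinality version of the decomposition, recovered by setting $w\equiv1$. The leaves-to-root order of the procedure guarantees that every $\Sigma_W(k\mathbf{s})$ is available before $\Sigma_W(\mathbf{s})$ is computed.
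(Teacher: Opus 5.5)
Your proposal is correct and follows essentially the same route as the paper: define $\Sigma_W(\mathbf{s})$ as the sum of tree scores over subtrees rooted at $\mathbf{s}$, split off the trivial tree $\{\mathbf{s}\}$, biject the non-trivial subtrees with the Cartesian product of the children's subtree sets, and factor by distributivity, with the base case at depth $L$. One caveat: the recurrence $|\mathcal{T}_L|=|\mathcal{T}_{L-1}|^m+1$ can serve only as a consistency check and not as justification of the bijection, since it is itself a consequence of that bijection. Your direct properness/completeness argument is what actually carries the step.
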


The subscript in $\Sigma_{\bullet}$ identifies the context-tree function over which the sum is performed. For each $\mathbf{s} \in \mathcal{N}(\tau_{\text{max}})$, the quantity $\Sigma_W(\mathbf{s})$ can be interpreted as the sum of the values of the function $W$ over all trees rooted at $\mathbf{s}$ with depth at most $L-\ell(\mathbf{s})$ (see equation \eqref{eq-app} in \ref{appendix-A}). Therefore, the recursive procedure accumulates these partial sums as it moves upward through the tree towards the root $\lambda$. Since each node contributes to the computation exactly once and the number of nodes in the maximal tree is $|\mathcal{N}(\tau_{\text{max}})| = (m^{L+1}-1)/(m-1)$, the algorithm in Theorem~\ref{theo1} has computational cost $O(|\mathcal{N}(\tau_{\text{max}})|) = O(m^L)$.  In contrast, a brute-force computation of the sum requires evaluating every tree in $\mathcal{T}_L$ resulting in a complexity of $O(m^{m^L})$.

It is also possible to obtain an analogous result of Theorem \ref{theo1} for computing the maximum over all trees $\max_{\tau \in \mathcal{T}_L} W(\tau)$. This is achieved by making the following modification: instead of the function $\Sigma_W$, define $\Upsilon_W$ such that:
\begin{align} \label{eq:max}
    \Upsilon_W(\mathbf{s}) = 
    \begin{cases}
        w(\mathbf{s}), \ &\text{if} \ \ell(\mathbf{s}) = L, \\[6pt]
        \max\left\{w(\mathbf{s}),  \displaystyle{\prod_{k=0}^{m-1}} \Upsilon_W(k\mathbf{s}) \right\}, \ &\text{if} \ \ell(\mathbf{s}) < L, 
    \end{cases}
\end{align}
so as in the summation case, the maximum is obtained by evaluating $\Upsilon_W$ at the root $\Upsilon_W(\lambda) = \max_{\tau \in \mathcal{T}_L} W(\tau)$. 

\begin{example} \label{ex_summax}
    Let $\mathcal{A} = \{0,1\}$ and $L = 2$. Consider the unity function given by $w(\mathbf{s}) = 1$, for all $\mathbf{s} \in \mathcal{N}(\tau_{\text{max}})$. We use Theorem \ref{theo1} to compute both the sum and the maximum of $W(\tau) = 1$ over all trees in $\mathcal{T}_2$. Figure \ref{ex3} illustrates how the procedure works. 
        \begin{figure}[H]
            \centering
            \begin{tikzpicture}[
            level distance=1.5cm,
            level 1/.style={sibling distance=6cm},
            level 2/.style={sibling distance=3cm},
            grow'=down,
            scale=0.6,
            every node/.style={
                draw,
                circle,
                thick,
                fill=black,
                font=\small,
                align=center,
                text centered,
                minimum size=1pt,
                inner sep=1.5pt},
            edge from parent/.style={draw, thick},
            internal/.style={
                draw,
                circle,
                thick,
                fill=white,
                minimum size=1pt,
                inner sep=1.5pt},
                ]
            \node[internal] (root) {$\lambda$}
            child { node[internal] {$1$}
            child { node {\color{white}{$1$}} }
            child { node {\color{white}{$0$}} }
        }
        child { node[internal] {$0$}
            child { node {\color{white}{$1$}} }
            child { node {\color{white}{$0$}} }
        };

    \begin{scope}[every node/.style={text=black, font=\small}]
        \node at (root-1-1) [below=10pt] {$w(11)=1$};
        \node at (root-1-2) [below=10pt] {$w(01)=1$};
        \node at (root-2-1) [below=10pt] {$w(10)=1$};
        \node at (root-2-2) [below=10pt] {$w(00)=1$};
        \node at (root-1) [right=10pt] {$w(1)=1$};
        \node at (root-2) [left=10pt] {$w(0)=1$};
        \node at (root) [above=15pt] {$w(\lambda)=1$};
        \node at (root-1) [right=10pt, yshift=15pt] {$\Sigma_W(1)=2$};
        \node at (root-2) [left=10pt, yshift=15pt] {$\Sigma_W(0)=2$};
        \node at (root) [above=30pt] {$\Sigma_W(\lambda)=5$};
        \node at (root-1) [right=10pt, yshift=30pt] {$\Upsilon_W(1)=1$};
        \node at (root-2) [left=10pt, yshift=30pt] {$\Upsilon_W(0)=1$};
        \node at (root) [above=45pt] {$\Upsilon_W(\lambda)=1$};
    \end{scope}
    \end{tikzpicture}
    \caption{Recursive computation of the sum and maximum over all trees for the unity weight function. Values $w(\mathbf{s})$ are first computed for all nodes of maximal tree $\tau_{\text{max}}$. Then, proceeding from the leaves to the root, $\Sigma_W(\mathbf{s})$ and $\Upsilon_W(\mathbf{s})$ are computed recursively. Finally, the sum and maximum are obtained at the root as $\Sigma_W(\lambda)$ and $\Upsilon_W(\lambda)$. The values of $w$, $\Sigma_W$, and $\Upsilon_W$ are displayed next to their corresponding nodes in the figure.}
    \label{ex3}
\end{figure}
\end{example}

\subsection{Computing the Marginal Likelihood}

A direct consequence of Theorem \ref{theo1} is the exact and recursive computation of the marginal likelihood.

\begin{corollary}\label{cor:1}
    The marginal likelihood $p(\mathbf{z}; W, \alpha)$ from \eqref{mod_evi}
    can be obtained by the following procedure:
    \begin{enumerate}
        \item For each node $\mathbf{s}\in\mathcal{N}(\tau_{\mathrm{max}})$, compute the suffix counts $c_{\mathbf{s}k}(\mathbf{z})$;
        \item Starting at the leaves and proceeding towards the root $\lambda$: simultaneously compute the values of $\Sigma_W(\mathbf{s})$ and $\Sigma_{WQ_\alpha}(\mathbf{s})$ for each node $\mathbf{s}\in\mathcal{N}(\tau_{\mathrm{max}})$;
        \item After all nodes have been evaluated, the marginal likelihood is obtained by evaluating the $\Sigma$-functions at the root:
        \begin{align*}
            p(\mathbf{z}; W, \alpha) = \frac{\Sigma_{WQ_{\alpha}}(\lambda)}{\Sigma_W(\lambda)}.
        \end{align*}
    \end{enumerate}
\end{corollary}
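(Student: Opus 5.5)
The plan is to reduce Corollary~\ref{cor:1} to two applications of Theorem~\ref{theo1}, one for the numerator of \eqref{mod_evi} and one for its denominator. By \eqref{mod_evi}, the marginal likelihood is the ratio
\begin{align*}
p(\mathbf{z}; W, \alpha) = \frac{\sum_{\tau \in \mathcal{T}_L} W(\tau) Q_{\alpha}(\tau, \mathbf{z})}{\sum_{\tau^\prime \in \mathcal{T}_L} W(\tau^\prime)}.
\end{align*}
Both numerator and denominator are sums of context-tree functions over $\mathcal{T}_L$.

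First, I will check that $WQ_\alpha$ is a context-tree function with node weights $w(\mathbf{s})\,q_\alpha(\mathbf{s},\mathbf{z})$. This follows from the Product property together with the identity $Q_\alpha(\tau,\mathbf{z}) = \prod_{\mathbf{s}\in\tau} q_\alpha(\mathbf{s},\mathbf{z}) = p_\alpha(\mathbf{z}\mid\tau)$, which is Lemma~\ref{lem:marginal}.

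The one point needing care is that $q_\alpha(\mathbf{s},\mathbf{z})$ depends only on the node $\mathbf{s}$ and not on the tree $\tau$ containing it. For this I will define $c_{\mathbf{s}k}(\mathbf{z})$ for every node of $\tau_{\max}$ as the number of times $t\in\{L+1,\dots,n\}$ with $z_t=k$ and $\mathbf{s}$ a suffix of $\mathbf{z}_1^{t-1}$. When $\mathbf{s}\in\tau$, properness and completeness make $\eta_\tau(\mathbf{z}_1^{t-1})=\mathbf{s}$ equivalent to $\mathbf{s}$ being a suffix of $\mathbf{z}_1^{t-1}$. So this count agrees with the one in \eqref{likeli}, and the node weights are well defined on $\mathcal{N}(\tau_{\max})$. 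Computing these counts for all nodes is step 1 of the procedure.

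Next, I will apply Theorem~\ref{theo1} twice. Applied to $W$, it gives $\Sigma_W(\lambda)=\sum_{\tau} W(\tau)$. Applied to $WQ_\alpha$, it gives $\Sigma_{WQ_\alpha}(\lambda)=\sum_\tau W(\tau)Q_\alpha(\tau,\mathbf{z})$.

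Both recursions run over the same node set $\mathcal{N}(\tau_{\max})$ in the same leaves-to-root order. They can therefore be carried out simultaneously in a single pass, which is step 2.

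Dividing the two root values then yields the claimed expression by \eqref{mod_evi}. This requires $\Sigma_W(\lambda)>0$, which holds whenever $\pi_W$ is well defined.

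I expect no real obstacle. The only substantive step is the well-definedness of the node-level counts independent of $\tau$, and the rest is a direct invocation of Theorem~\ref{theo1}.
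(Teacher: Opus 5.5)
Your proposal is correct and follows the paper's own argument. The paper presents the corollary as a direct consequence of Theorem~\ref{theo1}, applied once to $W$ and once to the product context-tree function $WQ_\alpha$ in the ratio \eqref{mod_evi}. Your explicit checks that the counts $c_{\mathbf{s}k}(\mathbf{z})$, and hence $q_\alpha(\mathbf{s},\mathbf{z})$, are well defined node-wise on $\mathcal{N}(\tau_{\max})$ independently of $\tau$, and that $\Sigma_W(\lambda)>0$, fill in details the paper leaves implicit.
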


Computing the sum for the context-tree function $WQ_\alpha$ is computationally more involved, as it requires evaluating the weight $q_\alpha(\mathbf{s}, \mathbf{z})$, which depends on the suffix counts $c_{\mathbf{s}k}(\mathbf{z})$. These counts can be precomputed in $O(nL)$ time by a single pass over the observed sequence. Therefore, in addition to the standard $O(m^L)$ cost of the recursive procedure in Theorem \ref{theo1}, the overall computational complexity for computing the marginal likelihood is $O(nL+m^L)$. By comparison, a brute-force computation of the marginal likelihood has complexity $O(nL+m^{m^L})$.

Corollary \ref{cor:1} can be viewed as an extension of the classic Context Tree Weighting (CTW) algorithm \citep{CTW} by allowing alternative \textit{weighting} strategies determined by the context-tree function $W$. Indeed, for a binary alphabet $(|\mathcal{A}|=2)$, when we choose the CTW function and set the Dirichlet hyperparameter to $\alpha = 1/2$, $p(\mathbf{z}; W, \alpha)$ corresponds exactly to the probability of the sequence assigned by the CTW. The only difference in our method is that the recursive procedure is carried out simultaneously for two context-tree functions: one corresponding to the product $WQ_\alpha$, and another corresponding to $W$.

Obtaining the posterior distribution \eqref{postdist} is even simpler, requiring only the sum of the context-tree function $WQ_\alpha$. In particular, it follows directly from the marginal likelihood computation in Corollary \ref{cor:1}:
\begin{align*}
    \pi_{W, \alpha}(\tau \mid \mathbf{z}) = \frac{W(\tau)Q_{\alpha}(\tau, \mathbf{z})}{\Sigma_{WQ_\alpha}(\lambda)}. 
\end{align*}


The exact posterior distribution allows us to evaluate the performance of MCMC methods that bypass the need for computing the marginal likelihood. In \ref{appendix-b}, we present an analysis of the Metropolis--Hastings algorithm proposed by \cite{kontoyiannis2022bayesiancontexttreesmodelling} on a simulated dataset, highlighting the statistical trade-offs incurred by adopting such an approach beyond the obvious loss of the standard Bayesian model selection.

\subsection{MAP Estimation}

From a Bayesian perspective, point estimation is commonly performed using the maximum a posteriori (MAP) estimator, which in our setting corresponds to a context tree that maximizes the posterior probability. From the form of posterior distribution \eqref{postdist}, a posterior mode tree, denoted by $\tau_{\text{map}}$, must satisfy
\begin{align} \label{eq:maxarg}
\tau_{\text{map}} 
                  = \arg\max_{\tau \in \mathcal{T}_L} \left( \frac{W(\tau)Q_\alpha(\tau, \mathbf{z})}{\sum_{\tau^\prime \in \mathcal{T}_L} W(\tau^\prime)Q_\alpha(\tau^\prime,\mathbf{z})} \right) 
                  = \arg\max_{\tau \in \mathcal{T}_L} \left( W(\tau) Q_\alpha(\tau, \mathbf{z}) \right).
\end{align}
Therefore, identifying a MAP estimator reduces to finding a tree that maximizes the context-tree function $WQ_\alpha$ over $\mathcal{T}_L$. 

The problem of searching a maximizing tree over the tree space for any context-tree function can also be solved recursively. In fact, as a consequence of Theorem \ref{theo1}, the ability to compute the maximum value of a context-tree function $W$ via the $\Upsilon$-function in \eqref{eq:max} leads to an algorithm for identifying a maximizing tree in simple steps. The underlying mechanism corresponds to that used in the Context Tree Maximizing (CTM) algorithm \citep{531122}.

\begin{proposition}\label{prop1}
    Let $W: \mathcal{T}_L \rightarrow \mathbb R$ be a context-tree function. Then a tree $\tau^\ast\in\mathcal{T}_L$ that satisfies
    $         
        W(\tau^\ast)= \max_{\tau\in\mathcal{T}_L} W(\tau) = \Upsilon_W(\lambda)
    $
    can be obtained by the following algorithm:
    \begin{enumerate}
        \item For each node $\mathbf{s} \in \mathcal{N}(\tau_{\mathrm{max}})$, compute the value of $\Upsilon_W(\mathbf{s})$. 
        \item Starting at the root $\lambda$ and proceeding towards the leaves: inspect each node $\mathbf{s}\in\mathcal{N}(\tau_{\mathrm{max}})$, prune the sub-tree below $\mathbf{s}$ if
            \begin{align} \label{max}
                w(\mathbf{s}) \ge \prod_{k=0}^{m-1}\Upsilon_W(k\mathbf{s});
            \end{align}
        Otherwise continue the same test for each child $k\mathbf{s}$, $k=0,\dots,m-1$.
        \item When all nodes have been inspected, the remaining tree is $\tau^\ast$.
    \end{enumerate}
\end{proposition}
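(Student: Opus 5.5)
We argue by induction on the height $h = L-\ell(\mathbf{s})$, proving a statement about sub-trees. For each node $\mathbf{s}\in\mathcal{N}(\tau_{\max})$, let $\mathcal{T}(\mathbf{s})$ be the set of full sub-trees rooted at $\mathbf{s}$ of depth at most $L-\ell(\mathbf{s})$. Each such sub-tree is identified with its set of leaves, which are nodes of $\tau_{\max}$ having $\mathbf{s}$ as a suffix. Extend the score to these sub-trees by $W(\sigma)=\prod_{\mathbf{u}\in\sigma}w(\mathbf{u})$.

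The claim to prove is the following. Let $\sigma^\ast(\mathbf{s})$ be the sub-tree produced by running step 2 of the algorithm starting at $\mathbf{s}$ instead of at $\lambda$. Then:
\begin{enumerate}
\item $\Upsilon_W(\mathbf{s})=\max_{\sigma\in\mathcal{T}(\mathbf{s})}W(\sigma)$;
\item $\sigma^\ast(\mathbf{s})\in\mathcal{T}(\mathbf{s})$ and $W(\sigma^\ast(\mathbf{s}))=\Upsilon_W(\mathbf{s})$.
\end{enumerate}
Taking $\mathbf{s}=\lambda$ gives the proposition, because $\mathcal{T}(\lambda)=\mathcal{T}_L$ and $\sigma^\ast(\lambda)=\tau^\ast$.

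\emph{Base case.} If $\ell(\mathbf{s})=L$, then $\mathcal{T}(\mathbf{s})=\{\{\mathbf{s}\}\}$. The algorithm keeps $\mathbf{s}$ as a leaf, since there is nothing below it to prune or test. Both sides of each claim equal $w(\mathbf{s})$.

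\emph{Inductive step.} If $\ell(\mathbf{s})<L$, use the structural decomposition from the proof of Theorem~\ref{theo1}. Every $\sigma\in\mathcal{T}(\mathbf{s})$ is either the single-leaf tree $\{\mathbf{s}\}$, or the disjoint union $\bigcup_{k}\sigma_k$ of sub-trees $\sigma_k\in\mathcal{T}(k\mathbf{s})$. This map is a bijection between $\mathcal{T}(\mathbf{s})\setminus\{\{\mathbf{s}\}\}$ and $\prod_k\mathcal{T}(k\mathbf{s})$. The leaf sets are disjoint because the children have different first symbols, so $W(\bigcup_k\sigma_k)=\prod_kW(\sigma_k)$. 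Since $w\ge 0$, maximizing a product of non-negative factors chosen independently gives
\[
\max_{(\sigma_k)}\prod_kW(\sigma_k)=\prod_k\max_{\sigma_k}W(\sigma_k).
\]
By the induction hypothesis the right-hand side is $\prod_k\Upsilon_W(k\mathbf{s})$. Therefore
\[
\max_{\sigma\in\mathcal{T}(\mathbf{s})}W(\sigma)=\max\Bigl\{w(\mathbf{s}),\prod_k\Upsilon_W(k\mathbf{s})\Bigr\}=\Upsilon_W(\mathbf{s}),
\]
which is \eqref{eq:max}. This establishes claim 1 for every node.

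For claim 2 at $\mathbf{s}$, consider the test \eqref{max}.
\begin{itemize}
\item If \eqref{max} holds, the algorithm returns $\{\mathbf{s}\}$. Its score is $w(\mathbf{s})$, which equals the maximum $\Upsilon_W(\mathbf{s})$ because \eqref{max} holds.
\item Otherwise, the algorithm recurses into each child independently and returns $\bigcup_k\sigma^\ast(k\mathbf{s})$. This is a valid member of $\mathcal{T}(\mathbf{s})$. By the induction hypothesis its score is $\prod_k\Upsilon_W(k\mathbf{s})$, which is strictly larger than $w(\mathbf{s})$ and hence equals $\Upsilon_W(\mathbf{s})$.
\end{itemize}
Ties are resolved in favor of pruning. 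This is harmless because either choice attains the maximum.

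\emph{Main obstacle.} The delicate points are bookkeeping rather than analysis. First, the top-down pruning pass must coincide with the recursive definition of $\sigma^\ast(\mathbf{s})$: once a node is pruned, none of its descendants are inspected, so the output is a full proper complete tree in $\mathcal{T}_L$. Second, the exchange of maximum and product requires non-negativity of $w$. If $W$ could take negative values, as the statement $W:\mathcal{T}_L\to\mathbb{R}$ would literally allow, this step fails. So we rely on the standing assumption $w:\mathcal{N}(\tau_{\max})\to\mathbb{R}_0^+$ from the definition of context-tree functions.
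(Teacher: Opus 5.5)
Your proof is correct and essentially the paper's. The paper also encodes the algorithm's output as a recursive function $T(\mathbf{r})$, well defined by the bijection from the proof of Theorem~\ref{theo1}, and proves by descending induction on $\ell(\mathbf{r})$ that $W(T(\mathbf{r}))=\Upsilon_W(\mathbf{r})$, using the same two cases of the pruning test. The one difference is that you prove $\Upsilon_W(\mathbf{s})=\max_{\sigma}W(\sigma)$ explicitly and rightly note that exchanging maximum and product needs $w\ge 0$; the paper only says this follows from the argument of Theorem~\ref{theo1} with sums replaced by maxima.
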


Applying Proposition \ref{prop1} to the context-tree function $WQ_{\alpha}$, after computing the suffix counts $c_{\mathbf{s}k}(\mathbf{z})$, corresponds through \eqref{eq:maxarg} to finding a MAP tree $\tau_{\text{map}}$. This naturally extends the BCT algorithm proposed by \cite{kontoyiannis2022bayesiancontexttreesmodelling} to any prior distribution proportional to a context-tree function $W$. In particular, when $W$ is chosen as the $\beta$-BCT function, our method reduces to the original BCT algorithm.

\section{Methodological Applications}\label{sec-applications}

\subsection{Predictive Model Comparison and Bayes Factors}

A key motivation for the proposed framework is the availability of the exact marginal likelihood $p(\mathbf{z}; W, \alpha)$ for any context-tree function $W$, as established in Corollary~\ref{cor:1}. Given two competing prior specifications $W_1$ and $W_2$, the \textit{Bayes factor} of $W_1$ against $W_2$ is defined as
\begin{align} \label{eq:BF}
    \mathrm{BF}(W_1, W_2) = \frac{p(\mathbf{z}; W_1, \alpha)}{p(\mathbf{z}; W_2, \alpha)}.
\end{align}
Values $\log_{10} \mathrm{BF}(W_1, W_2) > 1$ are conventionally regarded as strong evidence in favor of $W_1$ \citep{Kass01061995}.
The Bayes factor compares the full predictive distributions of $\mathbf{z}$ under each prior, averaging over all trees rather than conditioning on any particular one, which makes it sensitive to how well the prior allocates probability mass globally.
A prior that concentrates mass on well-fitting trees will be preferred even if both priors share the same MAP tree.

To understand what is gained from selecting a prior with higher marginal likelihood, it is useful to note that $p(\mathbf{z}; W, \alpha)$ is the predictive probability of the observed sequence under the model,
\begin{align*}
    p(\mathbf{z}; W, \alpha) = \prod_{t=L+1}^{n} p(z_t \mid z_1, \ldots, z_{t-1}; W, \alpha),
\end{align*}
where each factor is the one-step-ahead predictive probability of $z_t$ given the past, obtained by averaging the transition probabilities over the posterior distribution of trees given $z_1, \ldots, z_{t-1}$.
A model with higher marginal likelihood assigns higher probability to the sequence that was actually observed, meaning it provides better sequential predictions on average.
Consequently, selecting the prior with the highest marginal likelihood is equivalent to selecting the model whose structural assumptions lead to the best predictive performance on the observed data.

A concrete example of this connection is in terms of lossless data compression. By the source coding theorem, $-\log_2 p(\mathbf{z}; W, \alpha)$ equals the codelength (in bits) that an arithmetic coder based on model $(W, \alpha)$ assigns to $\mathbf{z}$ \citep{CTW}.
A prior with higher marginal likelihood therefore compresses the data more efficiently. More precisely, the gain in bits from using $W_1$ instead of $W_2$ to encode $\mathbf{z}$ is exactly $\log_2 \mathrm{BF}(W_1, W_2)$.
Incorporating correct structural knowledge about the process, such as the relevant context length or the absence of certain dependencies, directly translates into compression gains, while a misspecified prior that spreads probability mass over structurally implausible trees wastes bits encoding events that the data do not support.

Comparisons of practical interest include different weight functions within the same family (e.g., $\beta$-target $l$-depth for varying $\beta$ or $l$), priors encoding different structural constraints, or established constructions such as CTW and $\beta$-BCT versus newly proposed ones.
In existing approaches such as BCT \citep{kontoyiannis2022bayesiancontexttreesmodelling}, the marginal likelihood is restricted to the family of branching-process priors, limiting Bayes factor comparisons to variations of a single branching parameter.
The proposed framework removes this restriction: since the marginal likelihood is exact and tractable for any context-tree function $W$, Bayes factors between priors with entirely different structural forms are available.

\subsection{Maximal Depth Selection} \label{subsec:max_depth}

A key modeling choice in the proposed framework is the maximal depth $L$, which determines
the space of admissible trees $\mathcal{T}_L$. Although the recursive algorithms in Section
\ref{sec-inference} are efficient for fixed $L$, the choice of $L$ itself affects inference:
a value that is too small may prevent the model from capturing genuine long-range dependencies,
while a value that is unnecessarily large expands the model space, increasing the computational complexity and leading to a more diffuse posterior distribution.

A natural Bayesian treatment assigns a prior distribution $p(L)$ over the candidate depths
$L = 1, 2, \dots, L_{\max}$ and updates it with the data.
Since for each fixed $L$ the proposed framework provides the exact marginal likelihood $p(\mathbf{z}; W, \alpha)$
via Corollary \ref{cor:1}, the posterior distribution of $L$ is available in closed form:
\begin{align} \label{eq:postL}
    p(L \mid \mathbf{z}) \propto p(\mathbf{z}; W, \alpha, L)\, p(L),
\end{align}
where we write $p(\mathbf{z}; W, \alpha, L)$ to make the dependence on the depth explicit.

Because the marginal likelihood must be evaluated for all candidate depths in order to identify the MAP estimate $L^* = \arg\max_L p(L \mid \mathbf{z})$, depth selection reduces to running the recursive algorithm of Corollary~\ref{cor:1} once for each value of $L \in \{1, \dots, L_{\max}\}$ and selecting the depth that maximizes $p(\mathbf{z}; W, \alpha, L)\,p(L)$.

The prior $p(L)$ encodes beliefs about the plausible range of depths before observing the data.
A convenient and interpretable choice is the geometric prior $p(L) \propto \rho^L$ for some $\rho \in (0, 1)$, which penalizes deeper tree spaces exponentially and favors parsimonious models.
In the absence of strong structural beliefs, a discrete uniform prior over $\{1, \dots, L_{\max}\}$ reduces depth selection to maximizing the marginal likelihood directly. Examples of the use of the method are presented in \ref{appendix-C}.

\section{Simulation Study: Comparing different priors}\label{sec-simulations}

In this section, we present a simulation study comparing different choices of prior distributions in terms of marginal likelihood and MAP estimation.

\subsection{Data and Setup}

We generate binary data from two different VLMC models. The first corresponds to Example 4 in \cite{BerchRaft} (p. 353), and the second is an example of a $0$-renewing tree. Both models are illustrated in Figure \ref{sim-1}-\hyperlink{model1}{(a)} and \ref{sim-1}-\hyperlink{model2}{(b)}, respectively. A similar simulation study is presented in the supplementary material in \cite{kontoyiannis2022bayesiancontexttreesmodelling}, and includes the model \hyperlink{model1}{(a)}, allowing a direct comparison.

Throughout, we refer to the data-generating model as a scenario. In particular, the data generated from model \hyperlink{model1}{(a)} correspond to scenario (a), while the data generated from model \hyperlink{model2}{(b)} correspond to scenario (b). In each scenario, we assume that $\mathbf{z}$ is a sample compatible with the corresponding context tree $(\tau,\boldsymbol{\theta})$ illustrated in Figure \ref{sim-1}. In the simulation setup, we consider different sequence lengths $n = 200, 500, 1000,$ and $2500$, so we analyze two scenarios, each with four sequences, and each sequence has one of the specified lengths. Finally, we set the Dirichlet hyperparameter at $\alpha = 1/2$ and consider the maximal depth $L=10$, resulting in a model space with approximately $1.44 \times 10^{217}$ trees. 

Competing models are defined by selecting different context-tree functions $W$. For simplicity, we denote the $\beta$-BCT function by $B_{\beta}$, the CTW function by $C$, the $\beta$-constant function by $K_\beta$, the $\beta$-exponential function by $E_{\beta}$, the $\beta$-target $l$-depth function by $T_\beta^l$, the $a$-renewal indicator by $I_a$, and the $l$-$m$-depth indicator function by $D_l^m$, see Table \ref{table_ex} for details. The $\beta$ and $l$ parameters utilized in each scenario will be presented in the results.

\begin{figure}[t]
    \centering

    \begin{minipage}{0.3\textwidth}

        \begin{tikzpicture}[
            level distance=1.5cm,
            level 1/.style={sibling distance=7cm},
            level 2/.style={sibling distance=3.5cm},
            level 3/.style={sibling distance=1.75cm},
            grow'=down,
            every node/.style={
                draw,
                circle,
                thick,
                fill=black,
                font=\small,
                align=center,
                text centered,
                minimum size=1pt,
                inner sep=1.5pt
            },
            edge from parent/.style={draw, thick},
            scale=0.60,
            internal/.style={
                draw,
                circle,
                thick,
                fill=white,
                minimum size=1pt,
                inner sep=1.5pt
            },
        ]

        \node[internal] (root) {$\lambda$}
            child { node[internal] {1}
                child { node {\color{white}{1}} }
                child { node[internal] {{0}}
                    child { node {\color{white}{1}} }
                    child { node {\color{white}{0}} }
                }
            }
            child { node[internal] {0}
                child { node[internal] {1}
                    child { node {\color{white}{1}} }
                    child { node {\color{white}{0}} }
                }
                child { node[internal] {0}
                    child { node {\color{white}{1}} }
                    child { node {\color{white}{0}} }
                }
            };

        \begin{scope}[every node/.style={text=black, font=\tiny}]
            \node at (root-2-2-2) [below=8pt] {$(0.9, 0.1)$};
            \node at (root-2-2-1) [below=8pt] {$(0.6, 0.4)$};
            \node at (root-2-1-2) [below=8pt] {$(0.7, 0.3)$};
            \node at (root-2-1-1) [below=8pt] {$(0.3, 0.7)$};
            \node at (root-1-2-2) [below=8pt] {$(0.8, 0.2)$};
            \node at (root-1-2-1) [below=8pt] {$(0.4, 0.6)$};
            \node at (root-1-1) [below=8pt] {$(0.4, 0.6)$};
        \end{scope}

        \begin{scope}[every node/.style={text=black, font=\normalsize}]
            \node at (root) [above=44pt] {\hypertarget{model1}{(a)}};
        \end{scope}

        \end{tikzpicture}
    \end{minipage}
    \hfill
    \hspace{-3cm}
    \begin{minipage}{0.34\textwidth}
        \centering
        \hypertarget{model2}{(b)}

        \begin{tikzpicture}[
            level distance=1.5cm,
            level 1/.style={sibling distance=2cm},
            level 2/.style={sibling distance=2cm},
            level 3/.style={sibling distance=2cm},
            level 4/.style={sibling distance=2cm},
            grow'=down,
            every node/.style={
                draw,
                circle,
                thick,
                fill=black,
                font=\small,
                align=center,
                text centered,
                minimum size=1pt,
                inner sep=1.5pt
            },
            edge from parent/.style={draw, thick},
            scale=0.65,
            internal/.style={
                draw,
                circle,
                thick,
                fill=white,
                minimum size=1pt,
                inner sep=1.5pt
            },
        ]

        \node[internal] (root) {$\lambda$}
            child { node[internal] {1}
                child { node[internal] {{1}}
                    child { node[internal] {{1}}
                        child { node {\color{white}{1}} }
                        child { node {\color{white}{0}} }
                    }
                    child { node {\color{white}{0}} }
                }
                child { node {\color{white}{0}} }
            }
            child { node {\color{white}{0}} };

        \begin{scope}[every node/.style={text=black, font=\tiny}]
            \node at (root-2) [below=8pt] {$(0.1, 0.9)$};
            \node at (root-1-2) [below=8pt] {$(0.5, 0.5)$};
            \node at (root-1-1-2) [below=8pt] {$(0.5, 0.5)$};
            \node at (root-1-1-1-2) [below=8pt] {$(0.5, 0.5)$};
            \node at (root-1-1-1-1) [below=8pt] {$(0.9, 0.1)$};
        \end{scope}

        \end{tikzpicture}
    \end{minipage}

    \caption{VLMC models $(\tau, \boldsymbol{\theta})$ used to generate the data in each simulation scenario.}
    \label{sim-1}
\end{figure}

The CTW and $\beta$-BCT functions are included especially for comparison with the Bayesian framework of \cite{kontoyiannis2022bayesiancontexttreesmodelling}, as the resulting distributions correspond to the prior distributions used in that work. Our primary interest lies in the performance of the $\beta$-target $3$-depth function in scenario~(a) and the $0$-renewal indicator function in scenario~(b), as they induce suitable priors, that is, distributions that assign higher probability to the true trees. The other choices are considered to study different types of penalization, such as constant and length-based penalizations, or excluding trees whose depth is greater than the true depth.

\subsection{Simulation Results Evaluation}

We conduct our analysis using two approaches: for each competing model, we utilize the MAP tree $\tau_{\text{map}}$ as a point estimator and measure the discrepancy between this estimated tree and the generator tree, reporting both prior and posterior probabilities of the true tree $\tau$; and we utilize the base-10 log marginal likelihood as a metric to evaluate the evidence in favor of one competing model over another.

To formally analyze the discrepancy between two trees, we introduce a distance metric based on their structural differences.

\begin{definition}
For any two trees $\tau_1, \tau_2 \in \mathcal{T}_L$, the \textit{structural distance} between $\tau_1$ and $\tau_2$ is defined as $\Delta(\tau_1, \tau_2) := |\tau_1^c \triangle \tau_2^c|$, where $\triangle$ denotes the symmetric difference operator.
\end{definition}

This distance represents the total number of growing or pruning operations required to transform $\tau_1$ into $\tau_2$. To illustrate, consider the binary trees $\tau_1 = \{0, 01, 11\}$ and $\tau_2 = \{00, 10, 1\}$. The corresponding sets of inner nodes are $\tau_1^c = \{\lambda, 1\}$ and $\tau_2^c = \{\lambda, 0\}$. Therefore, the structural distance between $\tau_1$ and $\tau_2$ is $\Delta(\tau_1, \tau_2) = |\{\lambda, 1\} \triangle \{\lambda, 0\}| = |\{0, 1\}| = 2$, which is the number of operations needed to transform $\tau_1$ into $\tau_2$: (i) grow leaf $0$, and (ii) prune inner node $1$.
\cite{ZaninZambom04032025} introduced a similar metric for the purpose of clustering context trees. 

Tables \ref{tab-simula} and \ref{tab-simula-new} present the simulation results, consisting of the distance between the true tree and the MAP tree, the prior and posterior probabilities of the true tree, and the value of $\log_{10}$ of the marginal likelihood for each simulation scenario.  Since $\alpha$ is fixed along the models, we omit it from the notation in the simulation results. 


\begin{table}[ht]
\centering
\caption{Distance between the true tree and the MAP tree $\Delta(\tau, \tau_{\text{map}})$, prior and posterior probabilities of the true tree, $\pi_W(\tau)$ and $\pi_W(\tau \mid \mathbf{z})$ respectively, and $\log_{10} p(\mathbf{z}; W)$ under scenario (a).}
\label{tab-simula}
\vspace{1em}

\scriptsize 

\begin{minipage}{0.49\textwidth}
\centering
\textbf{$n=200$}\par\vspace{0.5em}
\begin{tabular}{c|c|c|c|c}
\toprule
$W$ & $\Delta$ & $\pi_W(\tau)$ & $\pi_W(\tau\mid\mathbf{z})$ & $\log p(\mathbf{z}; W)$ \\
\midrule
$D_0^3$ & 1 & 0.03846 & 0.19471 & -49.60 \\
$C$ & 2 & 1.22e-4 & 0.00602 & -50.59 \\
$B_{0.2}$ & 4 & 3.36e-6 & 3.87e-5 & -49.96 \\
$B_{0.7}$ & 4 & 6.00e-5 & 0.00920 & -51.08 \\
$T_2^3$ & 0 & 0.01738 & 0.11401 & -49.71 \\
$T_3^3$ & 0 & 0.04794 & 0.21042 & -49.54 \\
$T_{8}^3$ & 1 & 0.06817 & 0.22273 & -49.41 \\
$T_2^4$ & 2 & 4.72e-6 & 3.42e-4 & -50.75 \\
$K_{e^{-2}}$ & 4 & 5.15e-6 & 0.00321 & -51.69 \\
$K_{e^{-5}}$ & 6 & 9.29e-14 & 2.13e-9 & -53.26 \\
$E_{-1}$ & 4 & 1.79e-9 & 4.17e-6 & -52.26 \\
\bottomrule
\end{tabular}
\end{minipage}
\hfill
\begin{minipage}{0.49\textwidth}
\centering
\textbf{$n=500$}\par\vspace{0.5em}
\begin{tabular}{c|c|c|c|c}
\toprule
$W$ & $\Delta$ & $\pi_W(\tau)$ & $\pi_W(\tau\mid\mathbf{z})$ & $\log p(\mathbf{z}; W)$ \\
\midrule
$D_0^3$ & 0 & 0.03846 & 0.68559 & -121.15 \\
$C$ & 0 & 1.22e-4 & 0.14506 & -122.98 \\
$B_{0.2}$ & 0 & 3.36e-6 & 0.00392 & -122.97 \\
$B_{0.7}$ & 0 & 6.00e-5 & 0.36007 & -123.68 \\
$T_2^3$ & 0 & 0.01738 & 0.29001 & -121.12 \\
$T_3^3$ & 1 & 0.04794 & 0.34865 & -120.76 \\
$T_{8}^3$ & 1 & 0.06817 & 0.22818 & -120.43 \\
$T_2^4$ & 1 & 4.72e-6 & 0.00704 & -123.08 \\
$K_{e^{-2}}$ & 0 & 5.15e-6 & 0.44670 & -124.84 \\
$K_{e^{-5}}$ & 3 & 9.29e-14 & 0.00238 & -130.31 \\
$E_{-1}$ & 2 & 1.79e-9 & 0.06662 & -127.47 \\
\bottomrule
\end{tabular}
\end{minipage}
\end{table}
\begin{table}[ht]
\centering
\scriptsize
\begin{minipage}{0.49\textwidth}
\centering
\textbf{$n=1000$}\par\vspace{0.5em}
\begin{tabular}{c|c|c|c|c}
\toprule
$W$ & $\Delta$ & $\pi_W(\tau)$ & $\pi_W(\tau\mid\mathbf{z})$ & $\log p(\mathbf{z}; W)$ \\
\midrule
$D_0^3$ & 0 & 0.03846 & 0.72914 & -216.90 \\
$C$ & 0 & 1.22e-4 & 0.34219 & -219.07 \\
$B_{0.2}$ & 0 & 3.36e-6 & 0.14306 & -220.26 \\
$B_{0.7}$ & 0 & 6.00e-5 & 0.52063 & -219.56 \\
$T_2^3$ & 0 & 0.01738 & 0.29979 & -216.86 \\
$T_3^3$ & 1 & 0.04794 & 0.36120 & -216.50 \\
$T_{8}^3$ & 1 & 0.06817 & 0.24175 & -216.18 \\
$T_2^4$ & 1 & 4.72e-6 & 0.00744 & -218.82 \\
$K_{e^{-2}}$ & 0 & 5.15e-6 & 0.70849 & -220.76 \\
$K_{e^{-5}}$ & 0 & 9.29e-14 & 0.95854 & -228.64 \\
$E_{-1}$ & 0 & 1.79e-9 & 0.97203 & -224.36 \\
\bottomrule
\end{tabular}
\end{minipage}
\hfill
\begin{minipage}{0.49\textwidth}
\centering
\textbf{$n=2500$}\par\vspace{0.5em}
\begin{tabular}{c|c|c|c|c}
\toprule
$W$ & $\Delta$ & $\pi_W(\tau)$ & $\pi_W(\tau\mid\mathbf{z})$ & $\log p (\mathbf{z}; W)$ \\
\midrule
$D_3$ & 0 & 0.03846 & 0.92244 & -572.77 \\
$C$ & 0 & 1.22e-4 & 0.68134 & -575.14 \\
$B_{0.2}$ & 0 & 3.36e-6 & 0.29788 & -576.34 \\
$B_{0.7}$ & 0 & 6.00e-5 & 0.81028 & -575.52 \\
$T_2^3$ & 0 & 0.01738 & 0.69354 & -572.99 \\
$T_3^3$ & 0 & 0.04794 & 0.72845 & -572.57 \\
$T_{8}^3$ & 0 & 0.06817 & 0.58963 & -572.33 \\
$T_2^4$ & 0 & 4.72e-6 & 0.18752 & -575.99 \\
$K_{e^{-2}}$ & 0 & 5.15e-6 & 0.88558 & -576.63 \\
$K_{e^{-5}}$ & 0 & 9.29e-14 & 0.99441 & -584.42 \\
$E_{-1}$ & 0 & 1.79e-9 & 0.99345 & -580.14 \\
\bottomrule
\end{tabular}
\end{minipage}
\end{table}



\begin{table}[H]
\centering
\caption{Distance between the true tree and the MAP tree $\Delta(\tau, \tau_{\text{map}})$, prior and posterior probabilities of the true tree ($\pi_W(\tau)$ and $\pi_W(\tau \mid \mathbf{z}))$, and $\log_{10} p(\mathbf{z}; W)$ under scenario (b).}
\label{tab-simula-new}
\vspace{1em}

\scriptsize 

\begin{minipage}{0.49\textwidth}
\centering
\textbf{$n=200$}\par\vspace{0.5em}
\begin{tabular}{c|c|c|c|c}
\toprule
$W$ & $\Delta$ & $\pi_W(\tau)$ & $\pi_W(\tau\mid\mathbf{z})$ & $\log p(\mathbf{z}; W)$ \\
\midrule
$D_0^4$ & 0 & 1.48e-3 & 0.12656 & -52.97 \\
$C$ & 3 & 1.95e-3 & 0.14035 & -52.89 \\
$B_{0.2}$ & 3 & 1.31e-4 & 0.01681 & -53.15 \\
$B_{0.7}$ & 3 & 1.36e-3 & 0.16928 & -53.13 \\
$T^3_2$ & 0 & 1.09e-3 & 0.11153 & -53.05 \\
$I_0$ & 0 & 0.09091 & 0.20753 & -51.40 \\
$T^4_2$ & 0 & 1.89e-5 & 0.00684 & -53.60 \\
$K_{e^{-2}}$ & 3 & 2.81e-4 & 0.06388 & -53.39 \\
$K_{e^{-5}}$ & 4 & 2.05e-9 & 1.52e-6 & -53.91 \\
$E_{-1}$ & 3 & 7.23e-7 & 1.84e-4 & -53.44 \\
\bottomrule
\end{tabular}
\end{minipage}
\hfill
\begin{minipage}{0.49\textwidth}
\centering
\textbf{$n=500$}\par\vspace{0.5em}
\begin{tabular}{c|c|c|c|c}
\toprule
$W$ & $\Delta$ & $\pi_W(\tau)$ & $\pi_W(\tau\mid\mathbf{z})$ & $\log p(\mathbf{z};W)$ \\
\midrule
$D_0^4$ & 1 & 1.48e-3 & 0.04795 & -121.49 \\
$C$ & 1 & 1.95e-3 & 0.07774 & -121.57 \\
$B_{0.2}$ & 0 & 1.31e-4 & 0.01769 & -122.11 \\
$B_{0.7}$ & 1 & 1.36e-3 & 0.18299 & -122.10 \\
$T^3_2$ & 1 & 1.09e-3 & 0.09436 & -121.91 \\
$I_0$ & 0 & 0.09091 & 0.23855 & -120.39 \\
$T^4_2$ & 1 & 1.89e-5 & 0.00397 & -122.30 \\
$K_{e^{-2}}$ & 0 & 2.81e-4 & 0.34525 & -123.06 \\
$K_{e^{-5}}$ & 3 & 2.05e-9 & 7.76e-4 & -125.55 \\
$E_{-1}$ & 3 & 7.23e-7 & 0.01516 & -124.30 \\
\bottomrule
\end{tabular}
\end{minipage}

\vspace{1em}

\begin{minipage}{0.49\textwidth}
\centering
\textbf{$n=1000$}\par\vspace{0.5em}
\begin{tabular}{c|c|c|c|c}
\toprule
$W$ & $\Delta$ & $\pi_W(\tau)$ & $\pi_W(\tau\mid\mathbf{z})$ & $\log p(\mathbf{z}; W)$ \\
\midrule
$D_0^4$ & 0 & 1.48e-3 & 0.42729 & -237.29 \\
$C$ & 0 & 1.95e-3 & 0.43251 & -237.18 \\
$B_{0.2}$ & 0 & 1.31e-4 & 0.22225 & -238.06 \\
$B_{0.7}$ & 0 & 1.36e-3 & 0.62562 & -237.49 \\
$T^3_2$ & 0 & 1.09e-3 & 0.40149 & -237.40 \\
$I_0$ & 0 & 0.09091 & 0.28661 & -235.33 \\
$T^4_2$ & 0 & 1.89e-5 & 0.18344 & -238.82 \\
$K_{e^{-2}}$ & 0 & 2.81e-4 & 0.78394 & -238.27 \\
$K_{e^{-5}}$ & 3 & 2.05e-9 & 0.06964 & -242.36 \\
$E_{-1}$ & 0 & 7.23e-7 & 0.57120 & -240.73 \\
\bottomrule
\end{tabular}
\end{minipage}
\hfill
\begin{minipage}{0.49\textwidth}
\centering
\textbf{$n=2500$}\par\vspace{0.5em}
\begin{tabular}{c|c|c|c|c}
\toprule
$W$ & $\Delta$ & $\pi_W(\tau)$ & $\pi_W(\tau\mid\mathbf{z})$ & $\log p(\mathbf{z}; W)$ \\
\midrule
$D_0^4$ & 0 & 1.48e-3 & 0.19285 & -606.79 \\
$C$ & 0 & 1.95e-3 & 0.63302 & -607.18 \\
$B_{0.2}$ & 0 & 1.31e-4 & 0.59094 & -608.33 \\
$B_{0.7}$ & 0 & 1.36e-3 & 0.75434 & -607.42 \\
$T^3_2$ & 0 & 1.09e-3 & 0.34583 & -607.18 \\
$I_0$ & 0 & 0.09091 & 0.49179 & -605.41 \\
$T^4_2$ & 6 & 1.89e-5 & 0.03533 & -607.94 \\
$K_{e^{-2}}$ & 0 & 2.81e-4 & 0.86566 & -608.16 \\
$K_{e^{-5}}$ & 0 & 2.05e-9 & 0.99403 & -613.36 \\
$E_{-1}$ & 0 & 7.23e-7 & 0.99193 & -610.81 \\
\bottomrule
\end{tabular}
\end{minipage}
\end{table}

\subsection{Results and Comments}  

In simulation scenario (a), at the smallest sample size ($n=200$), only two models correctly identify the true tree via the MAP estimator. In particular, both are specified with $\beta$-target $3$-depth context-tree functions. For larger sample sizes, most of the models identify the true tree, differing only in their posterior probability values. According to the log-evidence metric, model $T_8^3$ performs best in all cases, with small differences among the other $\beta$-target $3$-depth models. This behavior is expected, as the resulting prior distributions assign higher mass to trees close to the maximal tree of depth $3$, varying only through the concentration parameter $\beta$. An important point is that, under the best prior $T_8^3$, the MAP estimate is, in some cases, not the true tree but rather the maximal tree of depth $3$. Nevertheless, the true tree still attains high posterior probability, indicating that the posterior distribution is strongly concentrated on trees that explain the data well. This suggests that higher log-evidence values normally imply good point estimators.

In simulation scenario (b), more models identify the true tree via the MAP estimator, even for small sample sizes. Interestingly, the $\beta$-target $3$-depth models are successful, whereas the $\beta$-BCT models are not. We emphasize model $T_2^4$ at the larger sample size ($n=2500$): observe that the MAP estimate is far from the true tree (differing by $6$ node operations). In fact, the estimated tree is the maximal tree of depth $4$. The prior probability assigned to trees close to it may lead the model to identify dependencies in the data that do not actually exist, suggesting that a misspecified prior can result in worse point estimators even in large samples. In terms of log-evidence, the $0$-indicator model $I_0$ is the best performer and is significantly superior to the others. This is expected, since the resulting distribution assigns prior mass only to $0$-renewing trees. Unlike in the first scenario, under model $I_0$, the MAP estimate coincides with the true tree for all sample sizes.

Overall, for both simulation scenarios, the $l$-$m$-depth indicator models ($D_0^3$ and $D_0^4$) perform well across all sample sizes, outperforming the $\beta$-BCT models both in recovering the true tree and with respect to log-evidence. This illustrates that selecting a uniform prior at the correct depth can be more effective than arbitrarily choosing a prior distribution, and it also highlights the importance of correctly specifying the maximal depth $L$.

The $\beta$-constant and $\beta$-exponential models perform worse for smaller sample sizes under both criteria. For larger samples, however, the true tree attains the highest posterior probability among all models. This is due to the strong penalization induced by these models. Indeed, the greater the penalization, the more posterior mass is assigned to the true tree in larger samples; in small samples, however, heavy penalization makes it difficult to identify the true tree. Therefore, in large-sample applications where structural estimation is desired, choosing these types of models may be advantageous.

The $\beta$-BCT models consistently identify the true tree for larger samples, but they underperform relative to the other models in terms of log-evidence. Moreover, changing the $\beta$ parameter yields very small differences between the $\beta$-BCT models under both criteria.

Finally, as $n$ increases, all models tend to correctly identify the true tree, indicating that with sufficient data, the influence of the prior becomes secondary. Even models such as $T_2^4$ in the first scenario and $T_2^3$ in the second scenario, which correspond to misspecified priors that mistake the target depth by one unit, successfully recover the true tree for larger samples.

\section{Application to Financial Market Data}\label{sec-financial}

To illustrate the proposed Bayesian framework in a real-world setting, we apply it to daily observations of the S\&P 500 index. Financial market data provide a natural setting for investigating variable-length dependence, as the temporal structure of market movements may vary across different market conditions. Rather than seeking to develop a predictive model of financial returns, our objective is to examine whether the proposed framework can identify and compare interpretable context-tree structures in discretized financial sequences. In particular, we consider two representations of the data, one describing the direction of daily market movements and another capturing the intensity of market fluctuations. This setting allows us to assess how alternative prior specifications influence the inferred memory structure and whether the resulting evidence supports a particular depth of temporal dependence.

\subsection{Data Description}

The data consist of daily closing prices of the S\&P 500 index from January 1990 to December 2025, obtained from \cite{YahooFinanceSP500}. Let $P_t$ denote the closing price on day $t$. We compute the daily log-return as
$$
    r_t = \log\left(\frac{P_t}{P_{t-1}}\right).
$$
From the return series, we construct two discrete sequences designed to capture different aspects of market behavior.

TThe first sequence represents the direction and magnitude of daily market movements. Returns are then classified into three categories according to their magnitude relative to a fixed threshold of $0.5\%$,
$
    \mathcal{A}_R = \{\text{D},\text{N},\text{U}\},
$
where D denotes days with returns below $-0.5\%$, U denotes days with returns above $0.5\%$, and N represents values between these two thresholds. This discretization provides a simple and economically interpretable representation of downward, neutral, and upward market movements. The threshold of $0.5\%$ was chosen to distinguish relatively small daily fluctuations from more substantial market movements while preserving a sufficiently large number of observations in each category.

The second sequence focuses on market turbulence. We use the absolute return,
$
    v_t = |r_t|,
$
as a simple proxy for the magnitude of daily fluctuations and classify its values into three categories according to the empirical terciles of the resulting distribution:
$\mathcal{A}_V = \{\text{L},\text{M},\text{H}\}$,
where L, M, and H represent low, medium and high volatility, respectively. Although absolute returns are not a direct measure of conditional volatility, they provide a simple observable proxy for daily market turbulence. The use of terciles produces three categories with approximately equal marginal frequencies and avoids imposing an arbitrary scale on the volatility measure. This representation is also motivated by the well-documented persistence of volatility in financial time series, commonly referred to as volatility clustering.

The resulting sequences are analyzed separately using the proposed framework.
For each sequence, the original numerical observations are therefore replaced by a finite alphabet representation, allowing the context tree model to characterize the dependence structure of market movements and fluctuations.

\subsection{Priors and Setup}

For our analysis, we consider $\alpha = 1/2$ and set the maximal depth to $L=10$, capturing memory dependencies of up to ten days (two trading weeks).
For both datasets, we consider the same set of competing priors based on context-tree functions: the unity (uniform) function $U$, the CTW $C$, the $\beta$-BCT functions $B_{0.25}$, $B_{0.75}$ and $B_{0.90}$, the $(-1/5)$-exponential $E_{-1/5}$ and the $3$-target $3$-depth $T_3^3$.

The inclusion of $T_3^3$ is motivated by the hypothesis that relevant dependence in financial market data may be concentrated over a short temporal horizon. More specifically, this prior provides a direct mechanism for assigning greater prior weight to context-tree structures associated with depth three. This makes $T_3^3$ useful for investigating whether a relatively short and specific memory horizon is supported by the data. In contrast, the $\beta$-BCT family controls local branching probabilities through the parameter $\beta$, rather than directly targeting a particular context depth. The two prior families therefore encode different assumptions about the structure of temporal dependence.

The remaining prior specifications serve complementary purposes. The CTW and $\beta$-BCT functions are included to provide a direct comparison with the established framework of \cite{kontoyiannis2022bayesiancontexttreesmodelling}.
The unity function $U$ serves as a baseline specification with no preferential weighting among the admissible structures. Finally, the exponential function $E_{-1/5}$ represents an alternative prior that penalizes longer contexts through an explicit dependence on context length.

All prior specifications are evaluated using the same data, alphabet, value of $\alpha$, and maximal depth $L$, so that differences in marginal likelihood can be attributed to the prior distribution assigned to the context-tree structures. The resulting base-10 log marginal likelihoods are then compared across prior specifications for each of the two discrete sequences.

\subsection{Results and Comments}

The log marginal likelihood values are reported in Table \ref{tab:daily}. For both datasets, the prior induced by $T_3^3$ achieves the highest marginal likelihood among the specifications considered.
For the daily return sequence, $T_3^3$ yields a log marginal likelihood of $-4052.370$, followed by $E_{-1/5}$ with $-4053.167$.
For the volatility sequence, the corresponding values are $-4273.204$ and $-4276.119$, respectively. Thus, the ranking of the two leading specifications is consistent across the two representations of the market data.

The $\beta$-BCT specifications perform consistently below $T_3^3$ and $E_{-1/5}$. Among the three values considered, $B_{0.75}$ achieves the highest marginal likelihood for both sequences, whereas $B_{0.25}$ and $B_{0.90}$ perform less favorably. This indicates that the marginal likelihood does not vary monotonically with $\beta$ over the values considered. More generally, the results suggest that controlling local branching probabilities through a single $\beta$ parameter does not provide the same degree of support for the observed memory structure as the more targeted $T_3^3$ specification.

The CTW prior also performs below $T_3^3$ and $E_{-1/5}$, although its performance is relatively close to that of some of the $\beta$-BCT specifications. In contrast, the unity prior $U$ yields by far the lowest marginal likelihood in both datasets. This indicates that, among the prior specifications considered, the data are better accommodated by priors that place some preference on the space of context trees rather than treating the competing structures uniformly.

\begin{table}[H]
\centering
 \scriptsize
\caption{Log marginal likelihood values under each prior distribution for daily return ($\mathbf{z}_R$) and daily volatility ($\mathbf{z}_V$) data.}
 \vspace{0.5em}
\begin{tabular}{c|ccccccc}
 \toprule
 $W$ & $C$ & $B_{0.25}$ & $B_{0.75}$ & $B_{0.90}$ & $U$ & $E_{-1/5}$ & $T_3^3$ \\
 \midrule
 $\log_{10} p(\mathbf{z}_R; W)$ & -4057.239 & -4057.323 & -4054.874 & -4056.484 & -4237.191 & -4053.167 & -4052.370 \\
 \midrule
    $\log_{10} p(\mathbf{z}_V; W)$ & -4281.084 & -4280.501 & -4278.374 & -4280.238 & -4351.729 & -4276.119 & -4273.204  \\
 \bottomrule
 \end{tabular}
 \label{tab:daily}
\end{table}
 
The results therefore provide evidence in favor of a context-tree prior that concentrates on a short and specific memory horizon. Importantly, this conclusion does not depend solely on the performance of $T_3^3$.
As a further assessment of the preferred memory depth, we apply the maximal-depth selection procedure from Subsection \ref{subsec:max_depth} to both datasets, assigning a discrete uniform prior to $L\in\{1,2,\ldots,10\}$ and a uniform prior to $\tau$. In both cases, the procedure selects $L^\ast=3$.
The agreement between the prior comparison and the maximal-depth selection procedure provides additional evidence that depth three is a particularly well-supported memory scale for both the return and volatility sequences.

\section{Discussion}\label{sec-conclusion}

The main contribution of this paper is to show that prior distributions over context trees can be specified directly in terms of structural node weights, rather than indirectly through branching probabilities. This is conceptually attractive because prior beliefs in VLMCs are naturally expressed in terms of properties of the tree itself, such as its depth, sparsity, or preference for specific contexts. In contrast, branching-process priors induce these effects only implicitly through local splitting rules, which makes prior elicitation harder to interpret and calibrate. The node-weight representation provides a more transparent and flexible mechanism for encoding structural assumptions while preserving the recursive computations needed for exact Bayesian inference.

A second key advantage of the proposed framework is its unifying character. The CTW and BCT priors, as well as more general branching-process constructions, can be recovered as special cases of the broader class of context-tree functions. At the same time, the framework also allows priors that are difficult to represent in the branching formulation, such as those based on depth preferences. This flexibility is important because the prior is not only a regularization device, but also encodes substantive assumptions about the underlying dependence structure. The ability to express these assumptions directly in terms of tree features is therefore central to the practical usefulness of the methodology.

The computational results show that, by exploiting the factorization of the prior and the Dirichlet-integrated likelihood, the proposed methodology yields exact recursive algorithms for computing marginal likelihoods, posterior probabilities, and MAP trees without enumerating the full tree space. This enables exact Bayes factors for comparing alternative structural priors, which is a major methodological advantage over simulation-based procedures. In this sense, the node-weight formulation is not only more interpretable, but also statistically more informative, because it supports principled model comparison and helps distinguish between competing structural hypotheses in a way that is difficult to achieve under more restrictive prior constructions.

The proposed framework has a minor limitation: the representation is not unique. Different weight functions may induce the same distribution over trees, and therefore the numerical values of the weights are not themselves uniquely interpretable. This is not a defect of the method, but rather a consequence of the fact that only the relative scores of trees matter for the induced prior. As a result, prior specification should be guided by structural meaning and application-specific considerations, rather than by the raw magnitude of the weights alone. In addition, the choice of the weight function remains a modeling decision and must be assessed carefully.

The availability of an open-source implementation makes the proposed methodology readily applicable in practice and facilitates the development of new structural priors within the proposed framework.
The proposed methods are implemented in the \texttt{bacontrees} R package, available at \url{https://github.com/Freguglia/bacontrees} and on CRAN. Reproducibility scripts for all numerical results are available \href{https://github.com/PaulichenT/vlmc_priors-repro-arxiv}{here}.

\begin{funding}
This work was financed in part by the Coordenação de Aperfeiçoamento de Pessoal de Nível Superior - Brasil (CAPES) - Finance Code 001, by the Conselho Nacional de Desenvolvimento Científico e Tecnológico (CNPq), under grant number 141012/2024-2 and by FAEPEX (Fundo de Apoio ao Ensino, à Pesquisa e à Extensão), Universidade Estadual de Campinas (Unicamp), under grant number 3387/23.
\end{funding}


\bibliographystyle{ba}
\bibliography{sample}

\appendix
\section*{Appendices}

\renewcommand\thefigure{\thesection.\arabic{figure}}  
\renewcommand\thetable{\thesection.\arabic{table}}  

\subsection{Proofs of the results} \label{appendix-A}

In this appendix, we present the proofs of the results of the work. Throughout, we fix a maximal depth $L$. 

\subsubsection{Proof of Lemma 1}

\begin{proof}[Proof of Lemma 1]
Under the assumptions of the lemma we have that:
\begin{align*}
    p_\alpha(\mathbf{z}\mid\tau) &= \int_{\Delta_m^{|\tau|}} p(\mathbf{z} \mid \tau, \boldsymbol{\theta}) \pi_\alpha(\boldsymbol{\theta} \mid \tau) d\boldsymbol{\theta} \\
    &= \int_{\Delta_m^{|\tau|}}  \prod_{\mathbf{s} \in \tau} \frac{\Gamma(m \alpha)}{\Gamma(\alpha)^m} \prod_{k=0}^{m-1} \theta_{\mathbf{s}}(k)^{c_{\mathbf{s}k}(\mathbf{z}) + \alpha - 1} d\boldsymbol{\theta} \\
    &= \prod_{\mathbf{s} \in \tau} 
    \int_{\Delta_m} \frac{\Gamma(m \alpha)}{\Gamma(\alpha)^m} \prod_{k=0}^{m-1} \theta_{\mathbf{s}}(k)^{c_{\mathbf{s}k}(\mathbf{z}) + \alpha - 1}  d\boldsymbol{\theta}_{\mathbf{s}} \\
    &= \prod_{\mathbf{s} \in \tau} \frac{\Gamma(m \alpha)}{\Gamma(\alpha)^m} \int_{\Delta_m} \prod_{k=0}^{m-1} \theta_{\mathbf{s}}(k)^{(c_{\mathbf{s}k}(\mathbf{z}) + \alpha) - 1} d\boldsymbol{\theta}_{\mathbf{s}} \\
    &= \prod_{\mathbf{s} \in \tau} \frac{\Gamma(m \alpha)}{\Gamma(\alpha)^m} \frac{\prod_{k=0}^{m-1} \Gamma\big(c_{\mathbf{s}k}(\mathbf{z}) + \alpha\big)}{\Gamma \left( \sum_{k=0}^{m-1} \big(c_{\mathbf{s}k}(\mathbf{z}) + \alpha\big) \right)} \\
    &= \prod_{\mathbf{s} \in \tau} 
    \frac{\Gamma(m \alpha)}{\Gamma(\alpha)^m} 
    \frac{\prod_{k=0}^{m-1} \Gamma \big(c_{\mathbf{s}k}(\mathbf{z}) + \alpha\big)}{\Gamma \left(\sum_{k=0}^{m-1} c_{\mathbf{s}k}(\mathbf{z}) + m\alpha \right)}.
\end{align*}
The fourth equality follows by recognizing that, for each $\mathbf{s} \in \tau$, the integrand is proportional to a Dirichlet density with parameter vector $(c_{\mathbf{s}0}(\mathbf{z}) + \alpha, c_{\mathbf{s}1}(\mathbf{z}) + \alpha, \dots, c_{\mathbf{s}(m-1)}(\mathbf{z}) + \alpha)$ and then multiplying by the corresponding normalizing constants. 
\end{proof}

\subsubsection{Proof of Theorem 1} \label{A3}

From now on, we will denote by $\mathcal{T}_L^{\mathbf{r}}$ the set of all trees with root node $\mathbf{r}$ and depth less than or equal to $L$.

\begin{proof}[Proof of Theorem 1]
    First, for a fixed $L$, define for each node $\mathbf{r} \in \mathcal{N}(\tau_{\text{max}})$:
    \begin{align} \label{eq-app}
        \Sigma_W(\mathbf{r}) = \sum_{\tau \in \mathcal{T}^{\mathbf{r}}_{L-\ell(\mathbf{r})}} \prod_{\mathbf{s} \in \tau} w(\mathbf{s}).
    \end{align}
    In particular, we can rewrite the set $\mathcal{T}^{\mathbf{r}}_{L-\ell(\mathbf{r})}$ as $\{\mathbf{r}\} \cup \mathcal{T}^{\mathbf{r}}_{L-\ell(\mathbf{r})} \setminus \{\mathbf{r}\}$, where $\{\mathbf{r}\}$ represents the tree that consists only of the root node $\mathbf{r}$. Moreover, for any tree $\tau \in \mathcal{T}_{L-\ell(\mathbf{r})}^{\mathbf{r}} \setminus \{{\mathbf{r}}\}$, for each $k \in \{0, \dots, m-1\}$ we can define the set:
    \begin{align*}
        \tau^k = \{\mathbf{s}k: \mathbf{s}k \in \tau\},
    \end{align*}
    which is nonempty by the full property, defines a tree in $\mathcal{T}^{k\mathbf{r}}_{L-\ell({\mathbf{r}})-1}$  and from the definition $\tau = \bigcup_{k=0}^{m-1} \tau^k$. Indeed, the function that assigns $\tau$ to the $m$-tuple $(\tau^0, \dots, \tau^{m-1})$ is a bijection between $\mathcal{T}_{L-\ell(\mathbf{r})}^{\mathbf{r}} \setminus \{{\mathbf{r}}\}$ and $\mathcal{T}^{0{\mathbf{r}}}_{L-\ell({\mathbf{r}})-1} \times \cdots \times \mathcal{T}^{(m-1){\mathbf{r}}}_{L-\ell({\mathbf{r}})-1}$. See an example of the bijection map in Figure \ref{fig1}.

    \begin{figure}[h!]
    \centering
    \begin{tikzpicture}[
        level distance=2cm,
        level 1/.style={sibling distance=5cm},
        level 2/.style={sibling distance=2.5cm},
        level 3/.style={sibling distance=1.25cm},
        grow'=down,
        every node/.style={
            draw,
            circle,
            thick,
            fill=white,
            font=\small,
            align=center,
            text centered,
            minimum size=1pt,
            inner sep=1.5pt
        },
        edge from parent/.style={draw, thick},
        scale=0.6,
        internal/.style={
            draw,
            circle,
            thick,
            fill=white,
            minimum size=1pt,
            inner sep=1.5pt
        },
    ]
    \node[style={draw, thick, opacity=0.3}] (root)  {{{$\lambda$}}}
        child[edge from parent/.style={draw, thick, opacity=0.3}]  { node[style={draw, thick, opacity=0.3}] {{1}}
            child { node {{1}} 
                child { node {{1}} }
                child { node {{0}} }
                }
            child { node {{0}}
                child { node {{1}} }
                child { node {{0}} }}
        }
        child[edge from parent/.style={draw, thick, opacity=0.3}] { node {{0}}
            child[edge from parent/.style={draw, thick, opacity=1}] { node[style={draw, thick, opacity=1, fill = black}] {\color{white}{1}} 
                child[edge from parent/.style={draw, thick, opacity=0.3}] { node[style={draw, thick, opacity=0.3}] {{1}} }
                child[edge from parent/.style={draw, thick, opacity=0.3}] { node[style={draw, thick, opacity=0.3}] {{0}} }}
            child[edge from parent/.style={draw, thick, opacity=1}] { node[style={draw, thick, opacity=1}] {{0}} 
                child { node[style={fill = black}] {\color{white}{1}} }
                child { node[style={fill = black}] {\color{white}{0}} }
            }
        };

    \begin{scope}[every node/.style={text=black, font=\small}]
        \node at (root) [below, yshift = -4cm] {$\tau = \{00\underline{0}, 10\underline{0}, 1\underline{0}\}$};
        \node at (root) [below, yshift = -4cm, xshift = 7cm] {$(\tau^0, \tau^1) = (\{0\underline{00}, 1\underline{00}\}, \{\underline{10}\})$}; 
    \end{scope}

    \draw[very thick,->] ([xshift=20pt]root-1-1.east) -- ++(4.5,0);

    \begin{scope}[xshift=11cm, yshift=-3cm, 
        level distance=2cm,
        level 1/.style={sibling distance=1.5cm},
        grow'=down,
        every node/.style={
            draw,
            circle,
            thick,
            fill=white,
            font=\small,
            align=center,
            text centered,
            minimum size=1pt,
            inner sep=1.5pt
        },
        edge from parent/.style={draw, thick},
        scale=0.6,
        internal/.style={
            draw,
            circle,
            thick,
            fill=white,
            minimum size=1pt,
            inner sep=1.5pt
        },
    ]
    \node (t1) {00}
        child { node[style={fill = black}] {\color{white}{1}} }
        child { node[style={fill = black}] {\color{white}{0}} };
    \end{scope}

    \begin{scope}[xshift=13cm, yshift=-3cm,
        level distance=2cm,
        level 1/.style={sibling distance=1.25cm},
        grow'=down,
        every node/.style={
            draw,
            circle,
            thick,
            fill=black,
            font=\small,
            align=center,
            text centered,
            minimum size=1pt,
            inner sep=1.5pt
        },
        edge from parent/.style={draw, thick},
        scale=0.6,
        internal/.style={
            draw,
            circle,
            thick,
            fill=white,
            minimum size=1pt,
            inner sep=1.5pt
        },
    ]
    \node (t2) {\color{white}{10}};
    \end{scope}
    
    \end{tikzpicture}
    
    \caption{Example of the bijection map for $L=3$ and $\mathcal{A}=\{0, 1\}$. The tree $\tau$ is an element of the set $\mathcal{T}^{0}_{2}$ while $\tau^0 \in \mathcal{T}^{00}_{1}$ and $\tau^1 \in \mathcal{T}^{10}_{1}$. The pair $(\tau^0, \tau^1)$ is the correspondent element to $\tau$ in $\mathcal{T}^{00}_{1} \times \mathcal{T}^{10}_{1}$. Here, the underlined symbols are representing the roots.}  
    \label{fig1}
\end{figure}
    
    Therefore, for a node $\mathbf{r}$ such that $\ell(\mathbf{r}) < L$, we can rewrite:
    \begin{align*}
         \Sigma_W(\mathbf{r}) &= \sum_{\tau \in \{\mathbf r\} \cup \mathcal{T}_{L-\ell(\mathbf{r})}^{\mathbf r} \setminus \{\mathbf r\}} \prod_{\mathbf{s} \in \tau} w(\mathbf{s}) \\
                              &= w(\mathbf{r}) + \sum_{\tau \in \mathcal{T}_{L-\ell(\mathbf{r})}^{\mathbf{r}} \setminus \{\mathbf{r}\}} \prod_{\mathbf{s} \in \tau} w(\mathbf{s}) \\
                              &= w(\mathbf{r}) + \sum_{(\tau^{0}, \dots, \tau^{m-1}) \in \mathcal{T}_{L-\ell(\mathbf{r})-1}^{0\mathbf{r}} \times \cdots \times \mathcal{T}_{L-\ell(\mathbf{r})-1}^{(m-1)\mathbf{r}}} \left( \prod_{\mathbf{s} \in \cup_{k=0}^{m-1} \tau^k} w(\mathbf{s}) \right) \\
                              &= w(\mathbf{r}) + \sum_{(\tau^0, \dots, \tau^{m-1}) \in \mathcal{T}_{L-\ell(\mathbf{r})-1}^{0\mathbf{r}} \times \cdots \times \mathcal{T}_{L-\ell(\mathbf{r})-1}^{(m-1)\mathbf{r}}} \left( \prod_{\mathbf{s} \in \tau^0} w(\mathbf{s}) \cdots \prod_{\mathbf{s} \in \tau^{m-1}} w(\mathbf{s}) \right) \\
                              &= w(\mathbf{r}) + \sum_{\tau^0 \in \mathcal{T}_{L-\ell(\mathbf{r})-1}^{0\mathbf{r}}} \cdots \sum_{\tau^{m-1} \in \mathcal{T}_{L-\ell(\mathbf{r})-1}^{(m-1)\mathbf{r}}} \prod_{k=0}^{m-1} \prod_{\mathbf{s} \in \tau^i} w(\mathbf{s}) \\
                              &= w(\mathbf{r}) + \prod_{k=0}^{m-1} \sum_{\tau \in \mathcal{T}^{k\mathbf{r}}_{L-\ell(\mathbf{r})-1}} \prod_{\mathbf{s} \in \tau} w(\mathbf{s}) \\
                              &= w(\mathbf{r}) + \prod_{k=0}^{m-1} \Sigma_W(k\mathbf{r}),
    \end{align*}
    and for a leaf node $\mathbf{r}$, that is, $\ell(\mathbf{r}) = L$, we have:
    \begin{align*}
        \Sigma_W(\mathbf{r}) = \sum_{\tau \in \mathcal{T}^{\mathbf{r}}_{0}} \prod_{\mathbf{s} \in \tau} w(\mathbf{s}) = w(\mathbf{r})
    \end{align*}
    since $\mathcal{T}^{\mathbf{r}}_{0} = \{\{\mathbf{r}\}\}$. Thus, the result follows by evaluating the function $\Sigma_W$ at the root $\lambda$:
    \begin{align*}
        \Sigma_W(\lambda) =  \sum_{\tau \in \mathcal{T}_L^\lambda} W(\tau) .
    \end{align*}
\end{proof}

As pointed out in Subsection \ref{subsec:recur}, the proof for the maximum follows similarly. In particular, it suffices to define, for each node $\mathbf{r} \in \mathcal{N}(\tau_{\text{max}})$,
\begin{align*}
    \Upsilon_W(\mathbf{r}) = \max_{\tau \in \mathcal{T}_{L - \ell(\mathbf{r})}^{\boldsymbol{r}}} \prod_{\mathbf{s} \in \tau} w(\mathbf{s}).
\end{align*}
Then, by replacing the summation with the maximum in the recursive equations, the same arguments guarantee that
\begin{align*}
    \Upsilon_W(\lambda) = \max_{\tau \in \mathcal{T}_L^\lambda} W(\tau). 
\end{align*}

\subsubsection{Proof of Proposition 1}

\begin{proof}[Proof of Proposition 1]
First, note that the resulting tree obtained after inspecting a specific node $\mathbf{r}\in\mathcal{N}(\tau_{\mathrm{max}})$ in the step: \lq\lq \textit{Prune the sub-tree below $\mathbf{r}$ if
\begin{align*}
        w(\mathbf{r}) \ge \prod_{k=0}^{m-1}\Upsilon_W(k\mathbf{r});
\end{align*}
Otherwise continue the same test for each child $k\mathbf{r}$, $k=0,\dots,m-1$}\rq\rq \ can be defined via the following recursive function:

\begin{align*}
T(\mathbf{r}) \in \mathcal{T}^{\mathbf{r}}_{L-\ell(\mathbf{r})}    
\end{align*}
where
\begin{itemize}
    \item if $\ell(\mathbf{r})=L$:
    \begin{align*}
        T(\mathbf{r})=\{\mathbf{r}\};
    \end{align*}
    \item otherwise, if $\ell(\mathbf{r})<L$:
    \begin{align*}
    T(\mathbf{r})=
    \begin{cases}
        \{\mathbf{r}\}, 
        & \text{if } w(\mathbf{r}) \geq \prod_{k=0}^{m-1} \Upsilon_W(k\mathbf{r}),\\[4pt]
        \displaystyle\bigcup_{k=0}^{m-1}T(k\mathbf{r}),
        & \text{otherwise}.
    \end{cases}
\end{align*}
\end{itemize}
This function is well defined by the bijection map in Proof of Theorem 1. Indeed, it ensures that, given a collection of trees $(T(0\mathbf{r}), \dots, T((m-1)\mathbf{r})) \in \mathcal{T}^{0\mathbf{r}}_{L-\ell(\mathbf{r})-1} \times \cdots \times  \mathcal{T}^{(m-1)\mathbf{r}}_{L-\ell(\mathbf{r})-1}$ there exists a tree $ \tau \in \mathcal{T}^{\mathbf{r}}_{L-\ell(\mathbf{r})}$, such that $\bigcup_{k=0}^{m-1} T(k\mathbf{r}) = \tau$. An illustration of the recursive procedure involving function $T$ is presented in Figure \ref{fig2}

\begin{figure}
    \centering
    \begin{tikzpicture}[
        level distance=2cm,
        level 1/.style={sibling distance=2.5cm},
        level 2/.style={sibling distance=1.25cm},
        grow'=down,
        scale=0.6,
        edge from parent/.style={draw, thick},
        every node/.style={
            draw,
            circle,
            thick,
            fill=black,
            font=\small,
            align=center,
            text centered,
            minimum size=1pt,
            inner sep=1.5pt
        },
        internal/.style={
            draw,
            circle,
            thick,
            fill=white,
            minimum size=1pt,
            inner sep=1.5pt
        },
        labelnode/.style={draw=none, fill=none}
    ]

    \node[internal] (root) {$\lambda$}
        child { node[internal] {1}
            child { node {\color{white}{1}} }
            child { node {\color{white}{0}} }
        }
        child { node[internal, dashed] {0}
            child { node {\color{white}{1}} }
            child { node {\color{white}{0}} }
        };

    \draw[very thick, dashed, ->] ([xshift=25pt]root-1.east) -- ++(7.5,2.5)
        node[midway, above, labelnode, yshift = -0.7cm] {$ w(0)\geq \Upsilon_W(00)\cdot \Upsilon_W(10) $};

    \draw[very thick, dashed, ->] ([xshift=25pt]root-1.east) -- ++(7.5,-2.5)
        node[midway, below, labelnode, yshift = 0.7cm] {$ w(0)< \Upsilon_W(00)\cdot \Upsilon_W(10) $};

    \begin{scope}[xshift=12cm, yshift=0.5cm]
        \node (t1) {\color{white}{$0$}};
    \end{scope}

    \begin{scope}[xshift=11.5cm, yshift=-4.5cm]
        \node (t2) {\color{white}{$00$}};
    \end{scope}

    \begin{scope}[xshift=13cm, yshift=-4.5cm]
        \node (t3) {\color{white}{$10$}};
    \end{scope}

    \begin{scope}[every node/.style={text=black, font=\small}]
        \node at (root) [below, yshift=-3cm]
            {$\tau_{\text{max}} = \{00, 10, 01, 11\}$};

        \node at (t1) [below, yshift=-0.5cm]
            {$\{\underline{0}\}$};

        \node at (t2) [below, yshift=-0.5cm, xshift=0.5cm]
            {$(\{\underline{00}\}, \{\underline{10}\})$};
    \end{scope}

    \end{tikzpicture}
    \vspace{-1.5cm}
    \caption{Illustration of the procedure involving the function $T$ for $L=2$ and $\mathcal{A} = \{0,1\}$. The node considered is $\mathbf{r}=0$ (shown as dashed in the figure). Note that if $w(0)\geq \Upsilon_W(00)\cdot \Upsilon_W(10)$, then $T(0)=\{\underline{0}\} \in \mathcal{T}_1^0$. Otherwise, the bijection map ensures, $T(0)=\{\underline{00}\}\cup \{\underline{10}\} \simeq \{0\underline{0},1\underline{0}\} \in \mathcal{T}_1^0$. The underlined symbols are representing the roots.}
    \label{fig2}
\end{figure}

Now, we can prove, by descending induction on $\ell(\mathbf{r})$, that for every $\mathbf{r} \in \mathcal{N}(\tau_{\text{}})$,
\begin{align}\label{ind}
    \Upsilon_W(\mathbf{r}) = \max_{\tau \in \mathcal{T}^{\mathbf{r}}_{L-\ell(\mathbf{r})}} \prod_{\mathbf{s} \in \tau} w(\mathbf{s}) = W(T(\mathbf{r})) = \prod_{\mathbf{s}\in T(\mathbf{r})} w(\mathbf{s}),
\end{align}
and therefore, the result will be obtained by evaluating the function $T$ at the root $\lambda$. Indeed, if $\ell(\mathbf{r})=L$, then $\mathcal{T}^{\mathbf{r}}_{0}=\{\{\mathbf{r}\}\}$, and hence
\begin{align*}
    \Upsilon_W(\mathbf{r}) = w(\mathbf{r}) = W(\{\mathbf{r}\}) = W(T(\mathbf{r})),  
\end{align*}
so \eqref{ind} holds. Then, assume that \eqref{ind} is valid for all nodes of length $0 < l \leq L$. Let $\mathbf{r}$ be a node that satisfies $\ell(\mathbf{r}) = l - 1$. By the definition of $\Upsilon_W$,
\begin{align*}
    \Upsilon_W(\mathbf{r}) = \max\left\{w(\mathbf{r}), \prod_{k=0}^{m-1} \Upsilon_W(k\mathbf{r})\right\}.
\end{align*}
Therefore, we have two cases:
\begin{enumerate}
    \item[(i)] If $w(\mathbf{r}) \geq \prod_{k=0}^{m-1}\Upsilon_W(k\mathbf{r})$, then $T(\mathbf{r})=\{\mathbf{r}\}$ and so:
    \begin{align*}
        W(T(\mathbf{r}))=w(\mathbf{r})= \max\left\{ w(\mathbf{r}), \prod_{k=0}^{m-1} \Upsilon_W(k\mathbf{r})\right\} = \Upsilon_W(\mathbf{r}).
    \end{align*}
    \item[(ii)] Otherwise, $w(\mathbf{r}) < \prod_{k=0}^{m-1}\Upsilon_W(k\mathbf{r})$, and by the definition of $T$:
    \begin{align*}
        T(\mathbf{r})=\bigcup_{k=0}^{m-1}T(k\mathbf{r}).
    \end{align*}
    Using the induction hypothesis on each child $k\mathbf{r}$:
    \begin{align*}
        W(T(\mathbf{r})) &= \prod_{\mathbf{s} \in \bigcup_{k=0}^{m-1}T(k\mathbf{r})} w(\mathbf{s}) \\ 
                         &= \prod_{k=0}^{m-1} \prod_{\mathbf{s} \in T(k\mathbf{r})} w(\mathbf{s}) \\
                         &= \prod_{k=0}^{m-1} \Upsilon_W(k\mathbf{r}) \\
                         &= \max\left\{ w(\mathbf{r}), \prod_{k=0}^{m-1} \Upsilon_W(k\mathbf{r})\right\} \\ 
                         &=\Upsilon_W(\mathbf{r}).
    \end{align*}
\end{enumerate}
In all cases, $W(T(\mathbf{r})) = \Upsilon_W(\mathbf{r})$, proving \eqref{ind}. Thus, the result follows by defining $\tau^\ast = T(\lambda)$. 
\end{proof}

\subsection{Comparing exact posterior distribution with Metropolis-Hastings approximation} \label{appendix-b}

In this appendix, our goal is to quantify the statistical costs obtained by adopting a Bayesian approach that avoids the computation of the marginal likelihood, beyond the obvious loss of standard Bayesian model selection. For this purpose, we evaluate the performance of the Metropolis--Hastings algorithm proposed by \cite{kontoyiannis2022bayesiancontexttreesmodelling}, on a simulated dataset.

\subsubsection{Data and setup}

We consider samples of size $n = 500, 1000, 1500, 2000, 2500, 3000, 3500, 4000, 4500, 5000$ from the VLMC model (a) presented in Figure \ref{sim-1}. The Bayesian models are specified considering as fixed parameters the maximal depth $L = 6$ and the Dirichlet hyperparameter $\alpha = 1/2$. In this case, the model space $\mathcal{T}_L$ has approximately $2.10 \times 10^{11}$ trees. We specify models by four different context-tree functions, which are the unity $U$, the $2$-target $3$-depth $T_2^3$, the CTW $C$, and the $(-1/2)$-exponential $E_{-1/2}$, see Table \ref{table_ex}. We run the Metropolis--Hastings algorithm under the same setup of \cite{VictorNancy} for $10{,}000$ iterations with a burn-in period of $1{,}000$ iterations, yielding a sequence of $M = 9{,}000$ sampled trees, denoted by $\tau^{(1)}, \tau^{(2)}, \dots, \tau^{(M)}$. Finally, we denote the set of distinct sampled trees by $\mathcal{T}_{\mathrm{mh}}=\left\{\tau^{(1)}, \tau^{(2)}, \dots, \tau^{(M)}\right\}$.

\subsubsection{Simulation results evaluation}

To evaluate the performance of the Metropolis--Hastings algorithm, we consider three quantities. The first is the number of distinct trees sampled by the algorithm, $N_{\mathrm{mh}}=|\mathcal{T}_{\mathrm{mh}}|$, which offers an indication of the diversity of the subset of trees explored. The second quantity is the \textit{total posterior probability mass} recovered by the Metropolis--Hastings algorithm,
\begin{align*}
P_{\mathrm{mh}}=\sum_{\tau \in \mathcal{T}_{\mathrm{mh}}} \pi_{W, \alpha}(\tau \mid \mathbf{z}),
\end{align*}
which measures the proportion of the posterior distribution represented by the sampled trees. Finally, we define the \textit{normalized MAP error} as
\begin{align*}
E_{\mathrm{MAP}}=
\frac{\left|\pi_{W, \alpha}(\tau_{\mathrm{map}}\mid\mathbf{z})- \frac{1}{M} \sum_{i=1}^M\mathbf{1}\{\tau^{(i)} = \tau_{\text{map}} \}\right|}{P_{\mathrm{mh}}}.
\end{align*}
This quantity measures the discrepancy between the exact posterior probability of the MAP tree and its empirical posterior probability estimated from the Metropolis--Hastings sample, normalized by the total posterior probability mass recovered by the algorithm. The simulation results are presented in Table \ref{tab:n_trees}, Figure \ref{fig:placeholder} and Table \ref{tab:normalized_map_error}. 

\begin{figure}[h]
    \centering
    \includegraphics[scale=0.6]{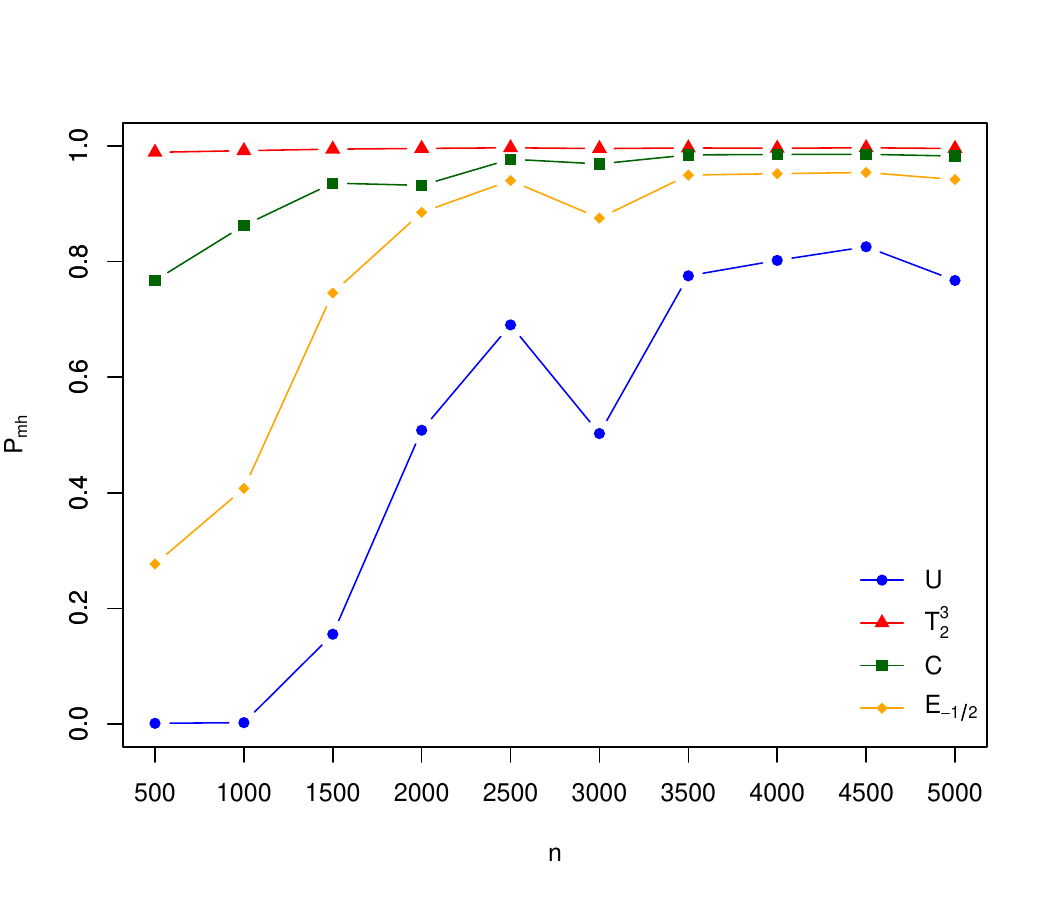}
    \vspace{-0.5cm}
    \caption{Total posterior probability mass recovered by the Metropolis--Hastings algorithm for different sample sizes and prior distributions.}
    \label{fig:placeholder}
\end{figure}

\begin{table}[h]
\centering
\caption{Number of distinct trees sampled by the Metropolis-Hastings algorithm for different sample sizes and prior distributions.}
\vspace{0.2cm}
\scriptsize
\label{tab:n_trees}
\begin{tabular}{c|cccccccccc}
\multicolumn{1}{c}{} & \multicolumn{10}{c}{$n$} \\
\cmidrule(lr){2-11}
$W$ & 500 & 1000 & 1500 & 2000 & 2500 & 3000 & 3500 & 4000 & 4500 & 5000 \\
\midrule
$U$       & 6090 & 5962 & 4749 & 2715 & 1859 & 2694 & 1461 & 1409 & 1320 & 1273 \\ \hline
$T_2^3$   &   66 &   68 &   49 &   41 &   27 &   38 &   40 &   33 &   31 &   25 \\ \hline
$C$       & 1186 &  687 &  371 &  155 &  106 &  146 &   84 &   84 &   78 &   89 \\ \hline
$E_{-1/2}$ & 4135 & 3710 & 1732 &  855 &  457 &  732 &  343 &  364 &  346 &  389 
\end{tabular}
\end{table}

\begin{table}[h]
\centering
\caption{Normalized MAP error for different sample sizes and prior distributions.}
\vspace{0.2cm}
\scriptsize
\label{tab:normalized_map_error}
\begin{tabular}{c|cccccccccc}
\multicolumn{1}{c}{} & \multicolumn{10}{c}{$n$} \\
\cmidrule(lr){2-11}
$W$ & 500 & 1000 & 1500 & 2000 & 2500 & 3000 & 3500 & 4000 & 4500 & 5000 \\
\midrule
$U$       & 3.1795 & 0.8175 & 0.0019 & 0.0155 & 0.0017 & 0.0007 & 0.0113 & 0.0012 & 0.0579 & 0.0129 \\ \hline
$T_2^3$   & 0.0072 & 0.0279 & 0.0131 & 0.0131 & 0.0198 & 0.0044 & 0.0111 & 0.0030 & 0.0034 & 0.0459 \\ \hline
$C$       & 0.0200 & 0.0186 & 0.0203 & 0.0197 & 0.0193 & 0.0096 & 0.0111 & 0.0311 & 0.0156 & 0.0318 \\ \hline
$E_{-1/2}$ & 0.0325 & 0.0226 & 0.0147 & 0.0317 & 0.0163 & 0.0161 & 0.0097 & 0.0287 & 0.0033 & 0.0230
\end{tabular}
\end{table}

\subsubsection{Results and comments}

The number $N_{\text{mh}}$ in Table \ref{tab:n_trees}, demonstrates that increasing the sample size leads the algorithm to visit fewer distinct trees. This occurs naturally as the likelihood function contributes more relative to the prior. A high value of $N_{\text{mh}}$ indicates that the algorithm wanders across the search space, transitioning from tree to tree in the absence of evidence. The prior choice is closely related with this space exploration. Non-informative priors ($U$ and $E_{-1/2}$) induce diffuse posterior distributions that drive broader exploration and higher values of $N_{\text{mh}}$. In contrast, informative and suitable priors ($C$ and $T_2^3$) concentrate the posterior mass, restricting the algorithm to a significantly smaller set of candidate structures.

The use of non-informative priors presents additional drawbacks regarding posterior coverage, see Figure \ref{fig:placeholder}. The values of $P_{\text{mh}}$ indicate that under the uniform prior, the algorithm captures at most $80\%$ of the total posterior mass, even at large sample sizes. This outcome has direct implications: because a non-negligible portion of the posterior probability remains dispersed across unvisited or under-sampled candidate structures, the algorithm can either underestimate or overestimate the probabilities of individual trees. 

On the other hand, under suitable priors, such as $T_2^3$, the algorithm recovers nearly all of the posterior mass. Showing the practical utility of using the Metropolis--Hastings method for posterior sampling in these regimes, since it is computationally cheaper than obtaining the exact quantities. The values of $E_{\text{MAP}}$ in Table \ref{tab:normalized_map_error} also demonstrate that the algorithm is good in estimating the MAP tree probability. Even when $P_{\text{mh}}$ is low, the estimation of the MAP tree probability remains good, producing a substantial error only under the uniform prior $U$ at small sample sizes. 

However, from a Bayesian perspective, point estimation is not the main focus, and this can easily be done under our Bayesian framework via the MAP estimator algorithm. Moreover, our results show that the algorithm exhibits important limitations under non-informative priors, including incomplete posterior coverage and degraded sampling efficiency. Therefore, while MCMC can be an efficient tool for point estimation and approximate posterior sampling under suitable priors, it cannot replace exact Bayesian approaches.

\subsection{Selecting maximal depths} \label{appendix-C}

In this appendix, to exemplify the use of the maximal depth selection procedure from Subsection \ref{subsec:max_depth}, we apply the method to the simulated data from Section \ref{sec-simulations}. As setup, we consider for $L$ a discrete uniform prior over $\{1, 2, \dots, 10\}$ and for $\tau$ a uniform prior over $\mathcal{T}_L$. This reduces the depth selection to maximizing the marginal likelihood over the unity context-tree function. Table \ref{tab:depths} summarizes the maximal depths selected in each simulation scenario.

\begin{table}[h]
    \centering
    \small
    \caption{Selected maximal depths in each simulation scenario and for each sample size.}
    \vspace{0.5em}
    \begin{tabular}{c|c|c|c|c} 
     & $n=200$ & $n=500$ & $n=1000$ & $n=2500$ \\ 
    \midrule
    Scenario & \multicolumn{4}{c}{Selected maximal depth $L$} \\
    \toprule
    (a) & 3 & 3 & 3 & 3 \\ \hline
    (b) & 4 & 4 & 4 & 4 \\ 
    \bottomrule
    \end{tabular}
    \label{tab:depths}
\end{table}

Although selecting the true tree depth is not the primary focus of the method, the results demonstrate its consistency, as it correctly identifies the true depth for both scenarios, $\ell(\tau) = 3$ for scenario (a) and $\ell(\tau) = 4$ for scenario (b).

\end{document}